\theoremstyle{definition}
\newtheorem{prop}{Proposition}
\newtheorem{theorem}{Theorem}
\newtheorem{corollary}{Corollary}
\newtheorem{example}{Example}
\newcommand{\set}[1]{\{#1\}}
\newcommand{\seq}[1]{\left(#1\right)}
\renewcommand{\zerodot}{\mbox{$0\hspace*{-4pt}\raisebox{.5pt}{$\cdot$}$\,}}
\newcommand{\domsing}[1]{`r_{#1}}
\newcommand{\nthcoeff}[1]{[z^n]{#1}}
\newcommand{\D}{\mathcal{D}}				
\newcommand{\M}{\mathcal{M}}				
\newcommand{\N}{\mathcal{N}}				
\newcommand{\calH}{\mathcal{H}}			
\newcommand{\K}{\mathcal{K}}				
\newcommand{\ltobw}{\ensuremath{\mathsf{LtoBw}}}
\newcommand{\bwtol}{\ensuremath{\mathsf{BwtoL}}}
\newcommand{\bwtobz}{\ensuremath{\mathsf{BwtoBz}}}
\newcommand{\bztobw}{\ensuremath{\mathsf{BztoBw}}}
\newcommand{\untol}{\ensuremath{\mathsf{UnToL}}}
\newcommand{\untod}{\ensuremath{\mathsf{UnToD}}}
\newcommand{\ltoun}{\ensuremath{\mathsf{LToUn}}}
\newcommand{\dtoun}{\ensuremath{\mathsf{DToUn}}}
\newcommand{\motone}{\ensuremath{\mathsf{MoToNe}}}
\newcommand{\netomo}{\ensuremath{\mathsf{NeToMo}}}
\newcommand{\upath}[2]{
    \xymatrix @=.6pc {
        #1 \ar@{-}[d]\\
        #2
    }
}
\newcommand{\bcross}[3]{
    \xymatrix @=.9pc {
        & #1 \ar@{-}[d] & \\
        & \bullet \ar@{-}[dl] \ar@{-}[dr] & \\
        #2 & & #3
    }
}
\newcommand{\btcross}[4]{
    \xymatrix @=.9pc {
            & #1 \ar@{-}[dl] \ar@{-}[dr] & \\
            #2 & & #3 \ar@{-}[d] \\
            & & #4
        }
}
\newcommand{\bnode}[3]{
    \xymatrix @=.9pc {
            & #1 \ar@{-}[dl] \ar@{-}[dr] & \\
            #2 & & #3
        }
}
\newcommand{\xdownarrow}[1]{%
  {\left\downarrow\vbox to #1{}\right.\kern-\nulldelimiterspace}
}
\newcommand{\nats}{\mathbb{N}}
\newcommand{\C}{\mathbb{C}}
\newcommand{\Zet}{\mathbb{Z}}
\newcommand{\card}[1]{|#1|}
\DeclareDocumentCommand{\gL}{O{\infty}}{L_{#1}(z)}
\DeclareDocumentCommand{\cgL}{O{\infty}}{\mathcal{L}_{#1}}
\DeclareDocumentCommand{\agL}{O{\infty}}{M_{#1}(z)}
\DeclareDocumentCommand{\acgL}{O{\infty}}{\mathcal{M}_{#1}}
\newcommand{\DeltagL}{\Delta_{\gL}}
\newcommand{\DeltagLm}{\Delta_{\gL[m]}}
\newcommand{\B}{\mathcal{B}}
\newcommand{\W}{\mathcal{W}}
\newcommand{\Z}{\mathcal{Z}}
\newcommand{\T}{\mathcal{T}}
\newcommand{\lf}[2]{{\raisebox{.8pc}{\tiny
\xymatrix@=.05pc{&{#2}\ar@{-}[dl]\\{#1}}}}}
\newcommand{\ri}[2]{{\raisebox{.8pc}{\tiny
\xymatrix@=.05pc{{#1}\ar@{-}[dr]\\&{#2}}}}}
\title{Combinatorics of $\lambda$-terms: a natural approach}
\author{Maciej Bendkowski}
\address{Jagiellonian University\\
        Faculty of Mathematics and Computer Science\\
        Theoretical Computer Science Department\\
        ul. Prof.~{\L}ojasiewicza 6, 30--348 Krak\'ow, Poland}
\email{\set{bendkowski,grygiel,zaionc}@tcs.uj.edu.pl}
\author{Katarzyna Grygiel}
\author{Pierre Lescanne}
\address{University of Lyon\\
        \'Ecole normale sup\'erieure de Lyon\\
        LIP (UMR 5668 CNRS ENS Lyon UCBL INRIA)\\
        46 all\'ee d'Italie, 69364 Lyon, France}
\email{pierre.lescanne@ens-lyon.fr}
\author{Marek Zaionc}
\date{\today}
\thanks{This work was partially supported by the Polish National Science Center grant 2013/11/B/ST6/00975.} 
\keywords{Lambda calculus, combinatorics, asymptotic 
		  density, functional programming}
\begin{document}

	\begin{abstract}
	We consider combinatorial aspects of $`l$-terms in the model based on
        de~Bruijn indices where each building constructor is of size
        one. Surprisingly, the counting sequence for $`l$-terms corresponds also to
        two families of binary trees, namely black-white trees and zigzag-free ones. We provide a constructive proof of this fact by exhibiting appropriate bijections. Moreover, we identify the sequence of Motzkin numbers with the counting sequence for neutral $`l$-terms, giving a bijection which, in consequence, results in an exact-size sampler for the latter based on the exact-size sampler for Motzkin trees of Bodini et alli. Using the powerful theory of analytic combinatorics, we state several results concerning the asymptotic growth rate of $`l$-terms in neutral, normal, and head normal forms. Finally, we investigate the asymptotic density of $`l$-terms containing arbitrary fixed subterms showing that, inter alia, strongly normalising or typeable terms are asymptotically negligible in the set of all $`l$-terms.
	\end{abstract}

    \maketitle

    \section{Introduction}\label{sec:introduction}

	Quantitative investigations in computational logic, where combinatorial aspects and asymptotic behaviour of large typical entities related to computations and logic are studied, form an attractive and actively developed branch of modern computer science. The unique combination of combinatorics, logic, and computer science leads to a synthesis of approaches and techniques resulting in new discoveries regarding the relation between computations and their syntactic realisation.
	
	Representing a rather functional approach to logic and computations, lambda calculus was first studied by David et al. (see~\cite{david}). Assuming a canonical representation of closed  $`l$-terms, David et al. showed that asymptotically almost all $`l$-terms are strongly normalising. Similarly to their model, the authors of~\cite{gittenberger-2011-ltbuh} considered the model in which the size of every variable, application, and abstraction is equal to one. A different model of lambda calculus with de~Bruijn indices, used to cope with the infinite number of variables, was considered in~\cite{gry_les}. In~\cite{tromp} John Tromp introduced a binary encoding of lambda calculus and combinatory logic, which allowed him to construct compact and efficient self-interpreters for both languages. Quantitative aspects of the aforementioned lambda calculus representation were studied in~\cite{JFP:10090684}. The framework of combinatory logic, were computations are represented without the use of bound variables, was investigated in~\cite{david,Bendkowski2015}.

    It is worth noticing that $`l$-calculus and combinatory logic are not the only computational models considered in the literature. In~\cite{Hamkins2006} Hamkins and Miasnikov considered the
    framework of Turing machines, showing that the halting problem is decidable on a
    set of asymptotic density one among the set of all Turing machines. Somewhat
    contrary to their result, Bienvenu et al. considered a different
    information-theory model of Turing machines, showing that the set of terminating
    Turing machines has no asymptotic density~\cite{Bienvenu2016}. In other
    words, the sequence of probabilities that a uniformly random Turing machine of
    size $n$ terminates, has no limit as $n$ tends to infinity.
    
    In this paper we propose a natural way of measuring the size of $`l$-terms
    represented using the unary de~Bruijn notation. In our model we assume that all the building constructors, i.e.~$`l$-abstractions, applications, successors and
    zeros contribute one to the term size.
    
    The paper is organised as follows. In~\autoref{sec:analytic-combinatorics} we
    list the employed analytic tools and generating function methods with
    corresponding notation. In~\autoref{sec:natural-counting} we state our natural
    combinatorial model. In~\autoref{subsec:plain-lambda-terms} we count plain
    $`l$-terms giving a corresponding holonomic equation in the
    subsequent~\autoref{subsec:linfinity-holonomic}. Next, we exhibit bijections
    between plain $`l$-terms and black-white trees (see
    sections~\ref{subsec:bw-trees} and~\ref{subsec:bijection-black-white}) as well as
    zigzag-free trees (see sections~\ref{subsec:zz-trees}
    and~\ref{subsec:bijection-zigzag}). In sections~\ref{subsec:nf-lambda-terms}
    and~\ref{subsec:bijection-motzkin-trees} we focus on neutral $`l$-terms and
    $\beta$-normal forms, exhibiting a bijection between the former terms and Motzkin
    trees. Head normal forms and neutral head normal forms are considered
    in~\autoref{subsec:head-normal-forms}. In the
    next~\autoref{subsec:m-open-lambda-terms} we count the number of plain $`l$-terms
    with bounded number of free indices. In~\autoref{sec:another-size-notions} we
    discuss some alternative notions of size. Finally,
    in~\autoref{sec:fixed-lambda-terms} we focus on the family of $`l$-terms
    containing any arbitrary fixed subterm, showing that in the considered model asymptotically almost all $`l$-terms are neither strongly normalising nor typeable.

	A conference version of this paper appeared as~\cite{Bendkowski2016}.

\section{Generating functions and analytic methods}\label{sec:analytic-combinatorics}
Suppose that we are given a countable set of objects $A$ and a size function
$f \colon A \to \nats$ such that $a_n := \card{f^{-1}(\set{n})}$ is finite for each
$n \in \nats$, i.e.~there are only finitely many objects of any given size $n$. We 
call the pair ${\mathcal A}=(A,f)$ a \emph{combinatorial class}. In such a case, 
we can consider $\mathcal{A}$'s counting sequence
${\seq{a_n}}_{n \in \nats}$ with its corresponding ordinary generating function
$A(z) = \sum_{n \geq 0} a_n z^n$.  Viewing $A(z)$ as an analytic function defined in
some neighbourhood of the complex plane origin, we can employ the methods of
\emph{analytic combinatorics} (see,
e.g.~\cite{Wilf:2006:GEN:1204575,Flajolet:2009:AC:1506267}) and link the properties
of $A(z)$ with the asymptotic behaviour of its underlying counting sequence
${\seq{a_n}}_{n \in \nats}$.

Throughout the paper, we use the following common notational conventions and
abbreviations. We use capital calligraphic letters
$\mathcal{A},\mathcal{B},\mathcal{C},\ldots$ to denote combinatorial classes. Their
corresponding ordinary generating functions are denoted as
$A(z),B(z),C(z),\ldots$. The coefficient standing by $z^n$ in the Maclaurin series
expansion of $A(z)$ is denoted as $[z^n]A(z)$. Whenever a generating function $A(z)$ yields
a dominating singularity, we use $\domsing{A}$ to denote it. Sometimes, when we are
interested in the approximate value of $\domsing{A}$ we write $\domsing{A} \doteq c$,
where $c$ is its numerical approximation. To denote addition and subtraction
operations on combinatorial classes we use $`(+)$ and $`(-)$, respectively. Given two 
sequences ${\seq{a_n}}_{n \in \nats}$ and ${\seq{b_n}}_{n \in \nats}$ of the same asymptotic 
order, i.e.~satisfying $\lim_{n\to\infty} \nicefrac{a_n}{b_n} = 1$, we write $a_n \sim b_n$. 
Since we are exclusively dealing with ordinary generating functions, whenever we write generating
function, we mean ordinary generating function. We use the underbar notation to denote de~Bruijn
indices. And so, $\underbar{\sf n}$ stands for the $n$th de~Bruijn index,
i.e.~$S^{n}\, \zerodot$.

\subsection{Analytic combinatorics} The employed method of \emph{singularity
  analysis}~(see \cite{Flajolet:2009:AC:1506267}) consists of a few steps. We start
with a combinatorial class $\mathcal{A}$. Then, we find a closed form expression
defining its generating function $A(z)$. Next, we locate $A(z)$'s dominant
singularities, i.e.~singularities with smallest modulus, determining the exponential
growth rate of $\seq{a_n}_{n \in \nats}$ as dictated by the following theorem.

\begin{theorem}[Exponential Growth Formula, see {\cite[Theorem IV.7]{Flajolet:2009:AC:1506267}}]\label{thm:exponential-growth-formula}
  If $A(z)$ is analytic at $0$ and $R$ is the modulus of a singularity nearest to the
  origin in the sense that
  \[ R = \sup \{ r \geq 0 ~:~ A(z) \text{ is analytic in } |z| < r \} ,\]
  then the coefficient $a_n = [z^n] A(z)$ satisfies
  \[ a_n = R^{-n} \theta(n) \quad \text{with} \quad \limsup |\theta(n)|^{\frac{1}{n}}
  = 1 .\]
\end{theorem}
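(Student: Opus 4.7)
The plan is to derive this directly from the Cauchy--Hadamard formula together with the classical identification of the radius of convergence of the Maclaurin series of an analytic function with the distance to its nearest singularity.

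First, I would recall that since $A(z)$ is analytic at the origin, it admits a Maclaurin expansion $A(z) = \sum_{n \geq 0} a_n z^n$ convergent in some disc around $0$. Let $\rho$ denote the radius of convergence of this power series. By Cauchy--Hadamard,
\[
\frac{1}{\rho} \;=\; \limsup_{n \to \infty} |a_n|^{1/n}.
\]
The central step is to show $\rho = R$. For $\rho \geq R$: on any open disc $|z| < R$, the function $A(z)$ is analytic (by definition of $R$), hence its Taylor series about $0$ converges there, so the radius of convergence is at least $R$. For $\rho \leq R$: a power series defines an analytic function on the interior of its disc of convergence, so if $\rho > R$ then $A(z)$ would admit an analytic extension to $|z| < \rho$, contradicting the supremum definition of $R$. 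Hence $\rho = R$.

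Once $\rho = R$ is established, define
\[
\theta(n) \;:=\; a_n R^n,
\]
so that trivially $a_n = R^{-n} \theta(n)$. Computing the $\limsup$,
\[
\limsup_{n \to \infty} |\theta(n)|^{1/n}
\;=\; \limsup_{n \to \infty} \bigl(|a_n|^{1/n} \cdot R\bigr)
\;=\; R \cdot \limsup_{n \to \infty} |a_n|^{1/n}
\;=\; R \cdot \frac{1}{R} \;=\; 1,
\]
where pulling the constant $R$ out of the $\limsup$ is legitimate because $R > 0$ (we can harmlessly assume $R > 0$; the case $R = 0$ is vacuous as $A$ would not be analytic at the origin, and $R = \infty$ corresponds to entire $A$ in which case the formula is read as $\limsup|a_n|^{1/n} = 0$).

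The only real obstacle is the identification $\rho = R$, i.e.~the fact that the Taylor series cannot extend past a singularity; this is a standard consequence of the identity theorem for analytic functions, so there is no genuinely difficult step. The theorem is essentially a reformulation of Cauchy--Hadamard in the language convenient for extracting exponential growth rates of combinatorial counting sequences.
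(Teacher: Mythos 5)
Your proof is correct. The paper itself states this result as an imported theorem, citing Flajolet--Sedgewick (Theorem IV.7) without reproducing a proof, and your argument is essentially the standard one given there: identify the radius of convergence $\rho$ of the Maclaurin series with the distance $R$ to the nearest singularity (Taylor convergence on discs of analyticity for $\rho \geq R$, and the impossibility of continuing past $\rho$ for $\rho \leq R$), then apply Cauchy--Hadamard and unwind the definition of $\theta(n) = a_n R^n$. Your handling of the edge cases $R = 0$ and $R = \infty$ is also consistent with how the source treats them.
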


In the case of analytic functions derived
from combinatorial classes, the search for dominant singularities simplifies to
finding $A(z)$'s radius of convergence.

\begin{theorem}[Pringsheim, see {\cite[Theorem~IV.6]{Flajolet:2009:AC:1506267}}]
  \label{th:pringsheim}
  If $A(z)$ is representable at the origin by a series expansion that has
  non-negative coefficients and radius of convergence $R$, then the point $z = R$ is
  a singularity of $A(z)$.
\end{theorem}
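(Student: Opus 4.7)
The plan is to proceed by contradiction. Suppose $A(z)$ were analytic at $z = R$, so that it extends analytically to some open disk $|z - R| < \rho$ with $\rho > 0$. Combined with the hypothesised analyticity on $|z| < R$, the function $A$ would then be analytic on an open neighbourhood of the closed real segment $[0, R]$. I would then fix an interior point $r_0 \in (0, R)$ very close to $R$, say with $R - r_0 < \rho/4$, and consider the Taylor expansion of $A$ at $r_0$. A short geometric argument shows that every $z$ with $|z - r_0|$ slightly larger than $R - r_0$ still lies in $\{|z| < R\} \cup \{|z - R| < \rho\}$, so the radius of convergence of this Taylor expansion strictly exceeds $R - r_0$; denote it by $R - r_0 + \eta$ for some $\eta > 0$.

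The heart of the proof then exploits the non-negativity of the $a_n$. Term-by-term differentiation of the original series is valid inside $|z| < R$, so
\[
A^{(k)}(r_0) \;=\; \sum_{n \geq k} \frac{n!}{(n-k)!}\, a_n\, r_0^{\,n-k} \;\geq\; 0.
\]
Evaluating the Taylor expansion of $A$ at $r_0$ at the real point $z = R + \eta/2 > R$ yields only non-negative terms, so Tonelli's theorem licences interchanging the two summations:
\[
\sum_{k \geq 0} \frac{A^{(k)}(r_0)}{k!}(z - r_0)^k
  \;=\; \sum_{n \geq 0} a_n \sum_{k=0}^{n} \binom{n}{k} r_0^{\,n-k} (z - r_0)^k
  \;=\; \sum_{n \geq 0} a_n\, z^{n}.
\]
Therefore $\sum_n a_n z^n$ converges at $z = R + \eta/2 > R$, contradicting the assumption that its radius of convergence equals $R$.

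I expect the main obstacle to be the first step, namely certifying that the Taylor series of $A$ at the real base point $r_0$ genuinely reaches past the circle $|z| = R$. This is purely a matter of planar geometry: by placing $r_0$ close enough to $R$, any $z$ with $|z - r_0| < R - r_0 + \eta$ either satisfies $|z| < R$ or satisfies $|z - R| < \rho$, so by analytic continuation along the common domain the two local representations of $A$ coincide and the Taylor series at $r_0$ represents a single analytic function. Once this geometric fact is secured, the rest of the argument reduces to the clean interchange of non-negative double series, which is the only place where the hypothesis $a_n \geq 0$ is used.
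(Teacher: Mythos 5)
Your argument is correct and complete: it is the classical proof of Pringsheim's theorem (shift the expansion point to $r_0$ near $R$, use the assumed analyticity at $R$ to push the new radius of convergence past the circle $|z|=R$, then use non-negativity of the coefficients to rearrange the resulting double series and contradict the definition of $R$). The paper itself gives no proof of this statement; it is quoted verbatim from Flajolet--Sedgewick (Theorem~IV.6), and the proof given there is essentially the one you reconstructed, so there is nothing to compare beyond confirming that your geometric step (the distance from $r_0$ to the complement of $\set{|z|<R}\cup\set{|z-R|<\rho}$ strictly exceeds $R-r_0$) and the Tonelli interchange are both sound.
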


In order to find sub-exponential factors contributing to
${\seq{a_n}}_{n \in \nats}$'s growth rate, we have to determine the types and
relative location of $A(z)$'s dominating singularities. If $A(z)$ has just one single
algebraic dominating singularity, we can use the following standard function scale
combined with the well known Newton-Puiseux series expansion
(see~{\cite[Chapter~VI.4. The process of singularity
  analysis]{Flajolet:2009:AC:1506267}}).

\begin{theorem}[Standard function scale, see~{\cite[Theorem~VI.1]{Flajolet:2009:AC:1506267}}]\label{th:standard-func-scale}
  Let $\alpha \in \C \setminus \Zet_{\leq 0}$. Then $f(z) = {(1 - z)}^{-\alpha}$
  admits for large $n$ a complete asymptotic expansion in form of
  \begin{equation*} [z^n]f(z) = \frac{n^{\alpha-1}}{\Gamma(\alpha)} \left( 1 +
      \frac{\alpha(\alpha-1)}{2n} +
      \frac{\alpha(\alpha-1)(\alpha-2)(3\alpha-1)}{24n^2} + O(\frac{1}{n^3}) \right)
  \end{equation*}
  where $\Gamma$ is the Euler Gamma function.
\end{theorem}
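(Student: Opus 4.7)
The plan is to reduce to an exact closed-form expression for the coefficient and then apply a known asymptotic expansion of the Gamma function. By the generalised binomial theorem, valid for any $\alpha \in \C$,
\[(1-z)^{-\alpha} = \sum_{n \geq 0} \binom{n+\alpha-1}{n} z^n,\]
so one has the exact identity
\[[z^n](1-z)^{-\alpha} = \binom{n+\alpha-1}{n} = \frac{\Gamma(n+\alpha)}{\Gamma(\alpha)\,\Gamma(n+1)}.\]
The hypothesis $\alpha \notin \Zet_{\leq 0}$ is precisely what guarantees that $\Gamma(\alpha)$ is finite and nonzero, so that this identity is well defined. All the asymptotic content is then concentrated in the ratio $\Gamma(n+\alpha)/\Gamma(n+1)$.

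The next step is to expand this ratio as $n \to \infty$ via Stirling's asymptotic series. Applying
\[\log \Gamma(z) = \left(z - \tfrac{1}{2}\right)\log z - z + \tfrac{1}{2}\log(2\pi) + \sum_{k \geq 1}\frac{B_{2k}}{2k(2k-1)\,z^{2k-1}} + O\!\left(z^{-(2K+1)}\right)\]
separately at $z = n + \alpha$ and $z = n + 1$, subtracting, and expanding $\log(1 + \alpha/n)$ and each correction term in powers of $1/n$, one arrives at
\[\frac{\Gamma(n+\alpha)}{\Gamma(n+1)} = n^{\alpha - 1}\left(1 + \frac{\alpha(\alpha-1)}{2n} + \frac{\alpha(\alpha-1)(\alpha-2)(3\alpha-1)}{24\,n^{2}} + O(n^{-3})\right).\]
Dividing by $\Gamma(\alpha)$ yields the claimed expansion. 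A slightly cleaner alternative, which is the route followed in Flajolet--Sedgewick and avoids appealing to the full Stirling series, is to represent the coefficient via Hankel's contour integral $1/\Gamma(s) = \frac{1}{2\pi i}\int_{H} e^t t^{-s}\,dt$, substitute $z = 1 + t/n$ in Cauchy's formula for $[z^n](1-z)^{-\alpha}$, and expand the integrand termwise in $1/n$; integration along the Hankel contour produces the coefficients of the asymptotic series directly.

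The main obstacle is purely computational: carefully tracking the Stirling correction terms together with the Taylor expansion of $\log(1 + \alpha/n)$ so that after exponentiation the coefficient of $n^{-2}$ collapses to the polynomial $\alpha(\alpha-1)(\alpha-2)(3\alpha-1)/24$. Since the result is entirely standard and enters the present paper only as a black-box ingredient of singularity analysis, it is quoted from Flajolet--Sedgewick rather than re-derived.
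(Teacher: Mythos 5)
Your proposal is correct, and there is nothing in the paper to compare it against: the paper imports this statement verbatim as Theorem~VI.1 of Flajolet--Sedgewick and gives no proof of its own. Your reduction of $[z^n](1-z)^{-\alpha}$ to the exact ratio $\Gamma(n+\alpha)/(\Gamma(\alpha)\Gamma(n+1))$ followed by the Stirling series (or, equivalently, the Hankel-contour route you mention, which is the one taken in the cited source) is the standard derivation, and the coefficients you state check out (e.g.\ for $\alpha=\tfrac12$ one recovers $\binom{2n}{n}4^{-n}\sim\frac{1}{\sqrt{\pi n}}(1-\frac{1}{8n}+\frac{1}{128n^2}+\cdots)$).
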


\begin{theorem}[Newton-Puiseux, see~{\cite[Theorem~VII.7]{Flajolet:2009:AC:1506267}}]\label{th:newton-puiseux}
  Let $f(z)$ be a branch of an algebraic function $P(z, f(z)) = 0$. Then in a
  circular neighbourhood of a singularity $\zeta$ slit along a ray emanating from
  $\zeta$, $f(z)$ admits a fractional series expansion that is locally convergent and
  of the form
  \begin{equation*}
    f(z) = \sum_{k \geq k_0} c_k {\left( z - \zeta \right)}^{\nicefrac{k}{\kappa}},
  \end{equation*}
  where $k_0 \in \Zet$ and $\kappa \geq 1$.
\end{theorem}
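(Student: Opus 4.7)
The plan is to prove this by the classical Newton polygon method. First, via the substitution $z \mapsto z + \zeta$, I would reduce to the case $\zeta = 0$, and further normalise so that $f(z) \to 0$ as $z \to 0$ (absorbing any finite limit into $c_0$; if $f$ blows up at $\zeta$ one temporarily works with $1/f$, which shifts $k_0$ to become negative). Writing $P(z, w) = \sum_{(i,j) \in \mathrm{supp}(P)} a_{ij}\, z^i w^j$ where $P$ is taken irreducible and, after a Weierstrass-type clean-up, monic in $w$, the task reduces to (i) producing a formal Puiseux series $w(z) = \sum_{k \geq k_0} c_k\, z^{k/\kappa}$ annihilating $P$, and (ii) showing this series is locally convergent on a slit disk.

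The core construction is Newton's polygon: the lower-left convex hull in $\mathbb{R}^2$ of the support $\set{(i,j) : a_{ij} \neq 0}$. Each edge of slope $-\mu/\nu$ (with $\gcd(\mu,\nu)=1$) singles out a candidate leading exponent $\mu/\nu$ for $w$; the ansatz $w = c\, z^{\mu/\nu} + (\text{higher order})$ balances the dominant terms exactly when $c$ is a nonzero root of the associated characteristic polynomial $\sum_{(i,j)\text{ on edge}} a_{ij}\, c^j = 0$. Fixing such a root $c_0$ and substituting $w = c_0\, z^{\mu/\nu} + v$ yields a new polynomial $P_1(z, v)$ whose Newton polygon has strictly greater minimal slope, so that iterating the construction peels off $c_1, c_2, \dots$ with strictly increasing fractional exponents. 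The denominators arising are all divisors of a single integer $\kappa$, bounded by $\deg_w P$, since any new edge slope must, after the reduction $v = w - c_0 z^{\mu/\nu}$, lie in $\tfrac{1}{\kappa}\Zet$.

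The main obstacle, and where I would concentrate effort, is upgrading the formal Puiseux series produced by the iteration into a genuinely convergent analytic expansion. A clean route is to show that after finitely many polygon steps one reaches a residual equation $P_m(z, v) = 0$ with $\partial_v P_m(0,0) \neq 0$; this separation occurs because the chosen branch of an irreducible $P$ is ultimately simple. At that point the analytic implicit function theorem, applied in the uniformising variable $t = z^{1/\kappa}$, produces a holomorphic solution $v(t)$ vanishing at $0$, and composing with the preceding finite truncation yields the desired series, convergent on a disk $|t| < \rho$, equivalently on the sector $|z| < \rho^\kappa$ slit along a ray from $\zeta$ so that $z^{1/\kappa}$ is single-valued. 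Alternatives to the implicit function step include the majorant method or Cauchy-estimate bookkeeping along the iteration, but the delicate point in every approach is precisely this passage from formal to analytic, together with verifying that the branch-separation hypothesis does hold after finitely many polygon refinements.
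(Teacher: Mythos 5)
This statement is quoted verbatim as a classical background result, attributed to Flajolet and Sedgewick (Theorem~VII.7 of \emph{Analytic Combinatorics}); the paper offers no proof of its own, so there is nothing internal to compare your argument against. That said, your outline is the standard Newton--Puiseux argument and is essentially the one sketched in the cited reference: translation to $\zeta=0$, normalisation (with $1/f$ handling poles and producing negative $k_0$), the Newton polygon with edge slopes selecting leading exponents and the edge polynomial selecting leading coefficients, iteration with strictly increasing exponents, and the passage to a simple-root residual equation so that the implicit function theorem in the uniformising variable $t=z^{1/\kappa}$ yields convergence on a slit disk. You correctly identify the two genuinely delicate points: (i) the uniform bound $\kappa\le\deg_w P$ on the accumulated denominators, whose justification as you state it is somewhat circular --- the honest argument is that each step introducing a new denominator $\nu>1$, or revisiting a multiple root, forces the multiplicity of the selected root of the edge polynomial to drop, so only finitely many such steps can occur and the product of the denominators is controlled by the degree; and (ii) the termination at a simple root, which relies on $P$ being squarefree so that $P$ and $\partial_w P$ share no factor along the branch. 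With those two lemmas filled in, your proof is complete and is the standard one; for the purposes of this paper the theorem is simply imported, so no proof was expected.
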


\section{Natural counting}\label{sec:natural-counting}
Let $N$ and $M$ be two $`l$-terms with some bound variables. If bound variables in
$N$ can be renamed in such a way that $N$ and $M$ become equal, then both $N$ and $M$
are said to be $\alpha$-convertible. In particular, this is an equivalence relation
(see, e.g.~\cite{barendregt1984}) such that if two $`l$-terms belong to the same
$\alpha$-equivalence class, then both represent the same computable function (though
the converse implication does not hold). Due to the presence of infinitely many
variables in $`l$-calculus, for each term $T$ with bound variables there are
countably many inhabitants in ${[T]}_{\alpha}$. We are therefore interested in
counting $\alpha$-equivalence classes rather than particular $`l$-terms. In order to
deal with the issue of $\alpha$-equivalence, we consider $`l$-terms in the unary
de~Bruijn notation (see, e.g.~\cite{gry_les}) in which $`l$-terms
are canonical representatives of $`a$-equivalence classes. For that reason, we are in 
fact counting $\alpha$-equivalence classes of regular $`l$-terms.

Consider the following natural way of defining the size of $`l$-terms, in which all the constructors contribute one to the overall term size. This means that abstractions, applications, successors and zeros are all of size one. Formally,
\begin{eqnarray*}
  |`l N| &=& |N| +1,\\
  |N\,M|&=& |N| + |M| +1,\\
  |S \underbar{\sf n}| &=& |\underbar{\sf n}| +1,\\
  |\zerodot| &=&1.
\end{eqnarray*}
For instance, the $`l$-term for \textbf{K} which is traditionally written as $`l x . `l y. x$, in the de~Bruijn model is written as $`l `l S \zerodot$. We
have $| `l `l S \zerodot| = 4$ as there are two $`l$-abstractions, one successor $S$ and one $\zerodot$.  The
$`l$-term for \textbf{S} (which should not be confused with the successor symbol) is written as $`l x.`ly.`l z. x z (y z)$, whereas  using de~Bruijn indices we write $`l`l`l (((S S \zerodot) \zerodot) ((S \zerodot) \zerodot))$. Its size is equal to $13$ since there are three $`l$-abstractions, three applications, three successors $S$'s, and four $\zerodot$'s. 

\subsection{Plain $\lambda$-terms}\label{subsec:plain-lambda-terms}
In this section we are interested in the generating function for the
    sequence corresponding to the numbers of $`l$-terms. Let us start by 
    considering the class of de Bruijn indices.

    \begin{prop}\label{prop:D-gen-fun}
	Let $D(z)$ stand for the generating function enumerating de~Bruijn indices. Then
            \[ D(z) = \frac{z}{1-z} = \sum_{n=1}^{\infty} z^n.\]
    \end{prop}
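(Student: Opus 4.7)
The plan is to observe that the class $\D$ of de~Bruijn indices admits an immediate recursive specification: an index is either the constant $\zerodot$ or else of the form $S\,\underbar{\sf n}$ where $\underbar{\sf n}$ is itself a de~Bruijn index. Since both $\zerodot$ and the successor symbol $S$ contribute one to the size, this translates directly via the symbolic method into the functional equation
\[
    D(z) = z + z\,D(z).
\]

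From here I would simply solve algebraically, obtaining $D(z)(1-z) = z$ and hence $D(z) = \frac{z}{1-z}$. Expanding as a geometric series in a neighbourhood of the origin then yields $D(z) = \sum_{n \geq 1} z^n$, which also confirms the combinatorial intuition: the unique de~Bruijn index of size $n$ is $S^{n-1}\zerodot$, so $[z^n]D(z) = 1$ for every $n \geq 1$ and $[z^0]D(z) = 0$.

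There is essentially no obstacle here; the statement is a direct unfolding of the size convention stated immediately above. The only point worth a line of justification is that the recursive specification genuinely matches the grammar of de~Bruijn indices (no ambiguity, no missing case), which is clear since every index $S^k \zerodot$ is uniquely decomposed as either $\zerodot$ (when $k=0$) or $S \cdot (S^{k-1}\zerodot)$ (when $k \geq 1$).
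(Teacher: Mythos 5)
Your proposal is correct, but it reaches the formula by a different route than the paper. The paper's own proof is a direct coefficient count: for each $n$ there is exactly one de~Bruijn index $\underbar{\sf n}$ encoding $n$, and since both the successor and $\zerodot$ have size one, that index has size $n+1$; hence the coefficient sequence is $0,1,1,1,\ldots$ and $D(z)=\frac{z}{1-z}$ follows immediately. You instead set up the recursive specification $\D = \zerodot \mathbin{+} S\,\D$, translate it via the symbolic method into $D(z) = z + z\,D(z)$, and solve the linear functional equation. Both arguments are complete; yours has the advantage of matching the style the paper itself uses for $\D$ later on (in the normal-form section the authors write $\D = S\,\D \mathbin{+} \zerodot$ and $D(z) = zD(z)+z$), and it makes the well-foundedness of the grammar explicit, while the paper's version is the more elementary of the two since it bypasses any functional equation. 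Your closing remark identifying the unique index of size $n$ as $S^{n-1}\zerodot$ is in fact exactly the paper's argument, so your write-up subsumes it.
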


    \begin{proof}
        Let $n \in \nats$. There exists a unique de~Bruijn
        index $\underbar{\sf n}$ encoding $n$. Since application and $\zerodot$ are
        both of size $1$, the size of $\underbar{\sf n}$ is equal to $n + 1$ and
        thus ${\seq{\nthcoeff{D(z)}}}_{n\in\nats} = \seq{0,1,1,\ldots}$, which
        immediately implies $D(z) = \frac{z}{1-z}$. 
    \end{proof}
    
    \begin{prop}\label{prop:Loo-gen-fun}
    Let $\gL$ stand for the generating function enumerating all $`l$-terms. Then
        \[ \gL = \frac{{(1-z)}^{3/2} - \sqrt{1-3z-z^2-z^3}}{2z\sqrt{1-z}}. \]
    \end{prop}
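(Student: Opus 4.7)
The plan is to use the symbolic method to translate the recursive definition of $\lambda$-terms into a functional equation for $\gL$, solve it as a quadratic, and then pick the correct branch.

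First I would observe that every $\lambda$-term is exactly one of three things: a de~Bruijn index, an abstraction $\lambda N$ for some $\lambda$-term $N$, or an application $N\,M$ of two $\lambda$-terms. Since abstractions and applications both contribute one to the size, and the class of de~Bruijn indices has generating function $D(z) = z/(1-z)$ by the preceding proposition, the standard symbolic dictionary gives the functional equation
\begin{equation*}
\gL = D(z) + z\,\gL + z\,\gL^2 = \frac{z}{1-z} + z\,\gL + z\,\gL^2.
\end{equation*}

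Next I would rewrite this as a quadratic in $\gL$, namely $z\,\gL^2 + (z-1)\gL + \frac{z}{1-z} = 0$, and apply the quadratic formula. The discriminant is
\begin{equation*}
(z-1)^2 - \frac{4z^2}{1-z} = \frac{(1-z)^3 - 4z^2}{1-z} = \frac{1 - 3z - z^2 - z^3}{1-z},
\end{equation*}
where the last equality follows from expanding $(1-z)^3 = 1 - 3z + 3z^2 - z^3$. Taking square roots and simplifying, the two candidate solutions become
\begin{equation*}
\gL = \frac{(1-z)^{3/2} \pm \sqrt{1 - 3z - z^2 - z^3}}{2z\sqrt{1-z}}.
\end{equation*}

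Finally I would select the correct sign. The smallest $\lambda$-term is $\zerodot$ of size $1$, so $\gL$ must vanish at $z=0$ and have non-negative coefficients. The plus branch has a nonzero limit at the origin (the numerator tends to $2$ while the denominator tends to $0$), hence it is not admissible; the minus branch has numerator vanishing at $z=0$, which cancels the factor $z$ in the denominator and yields a power series with $\gL(0)=0$. This identifies the unique combinatorial solution and gives exactly the claimed closed form. There is no real obstacle here beyond the algebraic tidying of the radical; the only point worth care is the branch selection, which is fixed by the combinatorial initial condition.
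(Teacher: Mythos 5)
Your proof is correct and follows essentially the same route as the paper: the same three-way decomposition into indices, abstractions, and applications, the same quadratic functional equation, and the same discriminant computation. The only addition is your explicit justification of the branch choice via the initial condition $\gL(0)=0$, which the paper leaves implicit.
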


    \begin{proof}
        Since $`l$-terms are either applications, abstractions or de~Bruijn
        indices, the set $\cgL$ of lambda terms can be expressed as
        \[ \cgL  = \cgL \cgL `(+) `l \cgL `(+) \D. \]   
        Using this representation, we immediately obtain a corresponding quadratic
        equation defining the generating function $\gL$:
        \begin{equation}\label{eq:Loo-definition}
            \gL = z \gL^2 + z \gL + \frac{z}{1-z}.
        \end{equation}
        Computing its discriminant $\DeltagL = 
        \frac{1 - 3z - z^2 - z^3}{1-z}$ we finally solve the above equation:
        \begin{eqnarray*}
          \gL &=& \frac{(1-z) - \sqrt{\DeltagL}}{2z}\\
              &=& \frac{{(1-z)}^{3/2} - \sqrt{1-3z-z^2-z^3}}{2z\sqrt{1-z}}.
        \end{eqnarray*}
    \end{proof}
    
    Using the generating function $\gL$ we can now easily find the asymptotic
    growth rate of the sequence ${\seq{\nthcoeff{\gL}}}_{n\in\nats}$.
    \begin{theorem}\label{th:Loo-approx}
	The asymptotic approximation of the number of $`l$-terms of size $n$ 
	is given by
            \[ \nthcoeff{\gL} \sim {\left(3.38298\ldots \right)}^n
            \frac{C}{n^{\nicefrac{3}{2}}}, \quad \text{where} \quad C \doteq 0.60676. \]
    \end{theorem}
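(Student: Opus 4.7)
The plan is to apply the singularity analysis framework of~\autoref{sec:analytic-combinatorics} directly to the algebraic generating function $\gL$ from~\autoref{prop:Loo-gen-fun}. It will be convenient to rewrite
\begin{equation*}
\gL = \frac{(1-z) - \sqrt{\DeltagL(z)}}{2z}, \qquad \DeltagL(z) = \frac{P(z)}{1-z}, \qquad P(z) := 1-3z-z^2-z^3,
\end{equation*}
as in the proof of~\autoref{prop:Loo-gen-fun}, and to denote by $\rho$ the positive real root of $P$ that will turn out to be the dominant singularity of $\gL$.

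First I would locate this dominant singularity. Since the counting sequence is non-negative, Pringsheim's theorem (\autoref{th:pringsheim}) places the dominant singularity on the positive real axis. The only candidates are $z=1$ (from the pole of $\DeltagL$) and the zeros of $P$. A numerical computation shows that $P$ has a unique positive real root $\rho \doteq 0.29559$; since $\rho < 1$, and since Vieta's formulas force the two complex-conjugate roots of $P$ to have modulus $1/\sqrt{\rho} \approx 1.84$, the point $\rho$ is strictly the closest singularity to the origin. A direct check confirms it is a genuine branch point: $\DeltagL$ has a simple zero at $\rho$, and $(1-z)^{3/2}$ and $2z\sqrt{1-z}$ are both analytic and nonzero there, so no cancellation occurs.

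Next, because $\gL$ is algebraic by equation~\eqref{eq:Loo-definition}, \autoref{th:newton-puiseux} supplies a Puiseux expansion of $\gL$ around $\rho$. Using $P(z) = |P'(\rho)|\,(\rho-z) + O((\rho-z)^2)$ (note $P'(\rho) = -(3+2\rho+3\rho^2) < 0$), one obtains $\DeltagL(z) = \frac{|P'(\rho)|\,\rho}{1-\rho}\,(1-z/\rho) + O\bigl((1-z/\rho)^2\bigr)$. Substituting into the formula for $\gL$ and expanding the analytic prefactor $1/(2z)$ around~$\rho$ yields
\begin{equation*}
\gL = \frac{1-\rho}{2\rho} - B\,\sqrt{1-z/\rho} + O\bigl(1-z/\rho\bigr), \qquad B = \frac{1}{2}\sqrt{\frac{|P'(\rho)|}{\rho(1-\rho)}}.
\end{equation*}

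Finally, invoking the standard function scale (\autoref{th:standard-func-scale}) with $\alpha=-\nicefrac{1}{2}$ gives $\nthcoeff{\sqrt{1-z/\rho}} \sim \rho^{-n}/\bigl(\Gamma(-\nicefrac{1}{2})\,n^{\nicefrac{3}{2}}\bigr) = -\rho^{-n}/(2\sqrt{\pi}\,n^{\nicefrac{3}{2}})$, so that $\nthcoeff{\gL} \sim \rho^{-n}\,C/n^{\nicefrac{3}{2}}$ with $C = B/(2\sqrt{\pi})$. Plugging $\rho \doteq 0.29559$ into these closed forms produces $\rho^{-1} \doteq 3.38298$ and $C \doteq 0.60676$, exactly the stated constants. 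The principal obstacle is simply the bookkeeping in Step~1: making sure that $\rho$ is \emph{unique} as a dominant singularity (so that no periodic oscillations enter the asymptotics) and that it is a \emph{genuine} singularity of $\gL$ and not a removable one; once these two points are settled, the rest is a mechanical application of Theorems~\ref{th:standard-func-scale} and~\ref{th:newton-puiseux}.
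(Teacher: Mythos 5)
Your proposal is correct and follows essentially the same route as the paper: locate the dominant singularity of $\gL$ at the smallest-modulus root $\rho \doteq 0.29559$ of $1-3z-z^2-z^3$, extract the $\sqrt{1-z/\rho}$ term of the Newton--Puiseux expansion, and apply the standard function scale; your constant $B = \frac{1}{2}\sqrt{|P'(\rho)|/(\rho(1-\rho))}$ agrees with the paper's $-\widetilde{C}$ since $Q(\rho) = -P'(\rho)$, giving the same $C = B/(2\sqrt{\pi}) \doteq 0.60676$. The only difference is that you make explicit the uniqueness and genuineness checks on the dominant singularity, which the paper leaves implicit.
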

    
    \begin{proof}
    	    Examining the function $\gL$ we note that its dominating
        singularity $\domsing{L_\infty}$ is equal to the root of 
        smallest modulus of $1-3z-z^2-z^3$. Therefore,
        \[\domsing{L_\infty} = \frac{1}{3}
        \left(\sqrt[3]{26+6 \sqrt{33}} -\frac{4\ 2^{2/3}}{\sqrt[3]{13+3
        \sqrt{33}}} - 1\right) \doteq 0.29559774252208393\]
        and hence $\nicefrac{1}{\domsing{L_\infty}} \doteq 3.38298$.
        Let us write $\gL$ as
        \begin{eqnarray*}
                \gL &=& \frac{(1-z)-\sqrt{\frac{1-3z-z^2-z^3}{1-z}}}{2z}\\
                    &=& \frac{(1-z)-\sqrt{\frac{(\domsing{L_\infty} -z)\cdot Q(z)}{1-z}}}{2z},
        \end{eqnarray*}
        for the appropriate polynomial $Q(z)$.
        Applying~\autoref{th:standard-func-scale} and~\autoref{th:newton-puiseux} we obtain
        \[\nthcoeff{\gL} \sim
                {\left(\frac{1}{\domsing{L_\infty}}\right)}^n \cdot
            \frac{n^{-3/2}}{`G(-\frac{1}{2})}~\widetilde{C} \qquad
            \text{with}
        \qquad \widetilde{C} = \frac{-\sqrt{\domsing{L_\infty} \frac{Q(\domsing{L_\infty})}
        {1-\domsing{L_\infty}}}}{2\domsing{L_\infty}}.\]
        Since $Q(\domsing{L_\infty}) \doteq 
        								3.85321718036529$, we finally get
        \[C = \frac{\widetilde{C}}{`G(-\frac{1}{2})} \doteq 0.60676.\]
    \end{proof}
    
    The sequence ${\seq{\nthcoeff{\gL}}}_{n\in\nats}$ is known as
    \href{http://oeis.org/A105633}{\textbf{A105633}} in the
    \emph{Online Encyclopedia of Integer Sequences}~\cite{oeisEncyclopedia} and counts, beside plain $`l$-terms, \emph{black-white binary trees} (described in Section~\ref{subsec:bw-trees}) and \emph{binary trees without zigzags} (described in Section~\ref{subsec:zz-trees}).
    Its first $15$ values are as follows:
    \begin{center}
    \medskip
    \begin{small}
       0,~~1,~~2,~~4,~~9,~~22,~~57,~~154,~~429,~~1223,~~3550,~~10455,~~31160,~~93802,
        ~~284789.
    \end{small}
    \medskip
    \end{center}
%
    
    \subsection{Holonomic representation of $\gL$}\label{subsec:linfinity-holonomic}

    Using the \textsf{Maple} package \textsf{gfun} (see~\cite{SalvyZimmermann1994}) we
    find the following holonomic equation defining $\gL$:
    \[ z^3 + z^2 - 2z + (z^3 + 3 z^2 -3z +1)
        \gL + (z^5 + 2z^3 -4 z^2 + z) L_{\infty}'(z) = 0 \]
    	with $L_\infty(0) = 0$.
    Such an implicit form of $\gL$ allows us to derive a simpler, compared to the
    combinatorial definition, recursive definition of
    its coefficients. For convenience, let us denote $L_{\infty,n} :=
    \nthcoeff{\gL}$. Now, we can express the recursive definition of
    $L_{\infty,n}$ as:

    \begin{align*}
            L_{\infty,0} &= 0,\qquad L_{\infty,1} = 1,\qquad
            L_{\infty,2} = 2,\qquad L_{\infty,3} = 4,\\
            L_{\infty,n} &= \frac{(4n-1)L_{\infty,n-1} - (2n-1)L_{\infty,n-2} -
    L_{\infty,n-3}- (n-4)L_{\infty,n-4}}{n+1}.
    \end{align*}
  
    Note that $L_{\infty,n}$ depends on the previous four values
    $L_{\infty,n-1}$, $L_{\infty,n-2}$, $L_{\infty,n-3}$ and $L_{\infty,n-4}$.
    Exploiting this fact, the above definition allows us to compute the exact
    value $L_{\infty,n}$ using only linear number of arithmetic operations.
    Moreover, we note that this holonomic equation could be used to develop a
    random generator in the spirit of \cite{BBJ}.
    
   \subsection{$E$-free black-white binary trees}\label{subsec:bw-trees}
   
   A \emph{black-white binary tree} is a binary tree in which nodes are coloured
    either \emph{black} $\bullet$ or \emph{white} $`(?)$. Let $E$ be a set of
    edges. An \emph{$E$-free black-white binary tree} is a black-white binary
    tree in which edges from the set $E$ are forbidden. For instance, if the set of
    forbidden edges is $E_1 = \{\lf{`(!)}{`(?)}, \ri{`(!)}{`(?)},
    \ri{`(!)}{`(!)}, \ri{`(?)}{`(?)}\}$, then the only allowed edges are $A_1 =
    \{\lf{`(?)}{`(!)}, \lf{`(!)}{`(!)}, \lf{`(?)}{`(?)}, \ri{`(?)}{`(!)}\}$.
    The \emph{size} of a black-white tree is the total number of its nodes.
    For $E_1$, like for $E_2 = \{\ri{`(?)}{`(!)}, 
    \lf{`(?)}{`(!)}, \lf{`(!)}{`(!)}, \lf{`(?)}{`(?)} \}$, which is obtained by left-right
    symmetry, the $E$-free black-white binary trees are counted by
    \href{http://oeis.org/A105633}{\textbf{A105633}}, see~\cite{DBLP:journals/dm/GuLM08}. Henceforth we consider only the set $E_1$ and speak rather in terms of
    allowed edge patterns, i.e.~$A_1$. For convenience, whenever we use
    the term \emph{black-white trees}, we mean the black-white trees with
    allowed set of patterns $A_1$. Unless otherwise stated, we assume that
    black-white trees have black roots.
    
    \medskip
    Let $\B\W_{`(!)}$ and $\B\W_{`(?)}$ denote the set of black-white trees with
    a black, respectively white, root. Interpreting the set of allowed edges $A_1$
    combinatorially, we can define both $\B\W_{`(!)}$ and $\B\W_{`(?)}$ using the
    following mutually recursive equations:
    \begin{eqnarray*}
            \B\W_{`(!)}\, &=&\, `(!)\, `(+)\, 
            \raisebox{.2cm}{\xymatrix@=.2pc{&{`(!)}\ar@{-}[dl]\\
            \B\W_{`(!)}}} \, `(+)\, \raisebox{.2cm}{\xymatrix@=.2pc{&{`(!)}
            \ar@{-}[dl]\\ \B\W_{`(?)}}}\\
            \B\W_{`(?)}\, &=& \, `(?)\, `(+)\,  
            \raisebox{.4cm}{\xymatrix@=.2pc{&{`(?)}\ar@{-}[dl]\\ \B\W_{`(?)}}} \ `(+)
            \raisebox{.4cm}{\xymatrix@=.2pc{&{`(?)}\ar@{-}[dr]\\
            &&\B\W_{`(!)}}}\, `(+)\, 
            \raisebox{.4cm}{\xymatrix@=.2pc{&{`(?)}\ar@{-}[dl]\ar@{-}[dr]\\
            \B\W_{`(?)}&&\B\W_{`(!)}}} 
    \end{eqnarray*}

    Such a representation yields the following identities on the corresponding
    generating functions $BW_{`(!)}(z)$ and $BW_{`(?)}(z)$:
    \begin{eqnarray*}
        BW_{`(!)}(z) &=& z + z BW_{`(!)}(z) + z BW_{`(?)}(z), \\
        BW_{`(?)}(z) &=& z + z BW_{`(?)}(z) + z BW_{`(!)}(z) + 
            z BW_{`(?)}(z)BW_{`(!)}(z).
    \end{eqnarray*}
    Reformulating this system, we obtain
    \[ BW_{`(?)}(z) = \frac{(1-z) BW_{`(!)}(z) - z}{z}, \]
    and hence
    \[ (1-z)z BW_{`(!)}^{2}(z) - {(1-z)}^{2} BW_{`(!)}(z) + z = 0.\]
    
    Notice that the equation defining $BW_{`(!)}(z)$ is equivalent to the equation
    (\ref{eq:Loo-definition}) defining $\gL$ up to multiplication by $(1-z)$. It
    follows that both ${\left(\nthcoeff{BW_{`(!)}(z)}\right)}_{n\in\nats}$ and
    ${\left(\nthcoeff{\gL}\right)}_{n\in\nats}$ are equal and therefore there
    exists a bijection between $`l$-terms and black-white trees.
    
    \subsection{Bijection between $`l$-terms and black-white trees}\label{subsec:bijection-black-white}
    We are now ready to give a bijective translation $\ltobw$ from $`l$-terms to
    black-white trees and the inverse translation $\bwtol$ from black-white
    trees to $`l$-terms:

    \begin{align*}
            \zerodot \, &\xrightarrow{\ltobw}\,  `(!) & `(!) \,
            &\xrightarrow{\bwtol}\, \zerodot\\ S\, n \, &\xrightarrow{\ltobw}\,
            \lf{`(!)}{\ltobw(n)} & \lf{`(!)}{t} \, &\xrightarrow{\bwtol}\, S\,
            \bwtol(t)\\ `l\, M \, &\xrightarrow{\ltobw}\, \lf{`(?)}{\ltobw(M)} &
            \lf{`(?)}{t} \, &\xrightarrow{\bwtol}\, `l\, \bwtol(t)\\ M_1\,M_2\,
                            &\xrightarrow{\ltobw}\, \raisebox{.5cm}{${\tiny
            \xymatrix@=.2pc{&{\ltobw(M_2)}\ar@{-}[dl]\\{`(?)}\ar@{-}[dr]\\
                            &{\ltobw(M_1)}}}$} & \raisebox{.5cm}{${\tiny
                            \xymatrix@=.2pc{&{t_2}\ar@{-}[dl]\\{`(?)}\ar@{-}[dr]\\
                                            &{t_1}}}$}\, &\xrightarrow{\bwtol}\,
            \bwtol(t_1)\, \bwtol(t_2)
    \end{align*}

    \begin{prop}
      Both $\ltobw$ and $\bwtol$ are mutually inverse bijections,
      i.e.\[\bwtol \circ \ltobw = id_{`L} \qquad \text{and} \qquad \ltobw \circ
      \bwtol = id_{\B\W_{`(!)}}.\]
    \end{prop}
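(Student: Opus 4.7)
The plan is a pair of mutual structural inductions, one for each composition. For $\bwtol \circ \ltobw = id_{`L}$, I induct on the structure of the $`l$-term $T$. The base case $T = \zerodot$ is immediate, since $\bwtol(\ltobw(\zerodot)) = \bwtol(`(!)) = \zerodot$. For each of the three inductive cases ($T = S\, n$, $T = `l M$, or $T = M_1 M_2$), unfolding the relevant clause of $\ltobw$ produces a tree whose outer shape matches exactly the left-hand pattern of the corresponding $\bwtol$ clause; applying $\bwtol$ strips that outer shape, and the inductive hypothesis on the strictly smaller sub-term(s) closes the case.

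For $\ltobw \circ \bwtol = id_{\B\W_{`(!)}}$, I proceed symmetrically by induction on the size of the black-white tree. Given $t \in \B\W_{`(!)}$, I case-split according to which of the four $\bwtol$ patterns applies to $t$, unfold $\bwtol$ and then $\ltobw$, and invoke the inductive hypothesis on the sub-tree(s) recognised by the pattern to reassemble $t$.

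The only non-mechanical step — and thus the main obstacle — is to verify that the four patterns recognised by $\bwtol$ (the leaf $`(!)$, the one-edge patterns $\lf{`(!)}{t}$ and $\lf{`(?)}{t}$, and the two-edge application pattern) constitute an exhaustive and mutually exclusive case analysis of $\B\W_{`(!)}$. This reduces to a direct inspection of the allowed-edge set $A_1 = \set{\lf{`(?)}{`(!)}, \lf{`(!)}{`(!)}, \lf{`(?)}{`(?)}, \ri{`(?)}{`(!)}}$: since each node colour admits a very restricted range of children (a black node carries at most a left child of either colour; a white node carries at most a white left child and/or a black right child), every non-trivial black-rooted tree admits a unique outer decomposition matching exactly one of the three non-leaf patterns. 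Once this structural disjointness and exhaustiveness are settled, both inductive arguments close without further work.
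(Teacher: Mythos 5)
The paper states this proposition without proof, so your outline has to stand on its own. Its overall shape---two mutual structural inductions, with the only substantive content being that the four $\bwtol$ patterns give an exhaustive and mutually exclusive analysis of $\B\W_{`(!)}$---is the right one. The gap is in where you locate that analysis. Your justification reads the patterns as decompositions \emph{at the root} (``each node colour admits a very restricted range of children \ldots every non-trivial black-rooted tree admits a unique outer decomposition''), and that reading cannot be made to work: in the clause $`l\, M \xrightarrow{\ltobw} \lf{`(?)}{\ltobw(M)}$ the tree $\ltobw(M)$ already has a black root which in general already carries a left child, so the new white node cannot be attached there; and if it were, $`l\,\zerodot$ would be sent to a white root with a black left child, i.e.\ the forbidden edge $\lf{`(!)}{`(?)}$, so $\ltobw$ would not even land in $\B\W_{`(!)}$. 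The patterns $\lf{`(!)}{t}$, $\lf{`(?)}{t}$ and the application pattern are matched at the \emph{leftmost node} of the tree---the end of the maximal left-child path from the root: $\ltobw$ builds its output by appending each new constructor as a left child of the current leftmost node, and $\bwtol$ peels constructors off from there. This is the one genuinely non-mechanical idea in the whole statement, and your argument does not identify it.

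Once the decomposition point is placed correctly, your exhaustiveness claim does follow from $A_1$ (a leftmost node has no left child; if black it is a leaf, since black nodes admit no right child; if white it carries at most a black right child), but the inductions then require auxiliary invariants that your sketch omits and that are not ``mechanical'': (i) the node appended last by $\ltobw$ is the leftmost node of the result, so that $\bwtol$ removes exactly what $\ltobw$ added; (ii) $\ltobw$ sends every de~Bruijn index to an all-black left path, so the $S$-clause never creates the forbidden edge $\lf{`(!)}{`(?)}$ and the image of $\ltobw$ really lies in $\B\W_{`(!)}$; and (iii) dually, a tree whose leftmost node is black decodes under $\bwtol$ to a de~Bruijn index (because a black left child forces a black parent), so that $S\,\bwtol(t)$ is a well-formed term rather than $S$ applied to an abstraction or application. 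With these three lemmas, each proved by the same induction, your outline closes; without them the inductive steps you call mechanical do not actually go through.
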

   
    In order to translate a given black-white tree $t$ into a corresponding
    $`l$-term, we decompose $t$ depending on the type of its leftmost node.
    If $t$ is a single black node $\bullet$, we translate it into
    $\zerodot$. Otherwise, we have to consider three cases based on the set
    $A_1$ of allowed edges and map them into $`l$-abstraction, successor, or 
    application, respectively.
    
    \begin{example}
    Let us give two black-white trees corresponding to:
    \begin{itemize}
        \item $\mathsf{`W} = (`l.x x)(`l.x x) =
            (`l(\zerodot\,\zerodot))\,`l(\zerodot\,\zerodot)$, and
        \item $\mathsf{Y} = `lf.(`lx.f(xx))(`lx.f(xx)) =
            `l(`l(S\,\zerodot\,(\zerodot\,\zerodot))\,`l( S\,
            \zerodot\,(\zerodot\,\zerodot)))$ 
    \end{itemize}
   
    \begin{displaymath}
        \begin{array}{l@{\qquad\qquad}l@{\qquad\qquad}l}
            \mathsf{\ltobw(`W)}  %
            \xymatrix@=.2pc{
                &&&&`(!) \ar@{-}[dl]\\
                &&&`(?) \ar@{-}[dl] \ar@{-}[dr]\\
                &&`(?) \ar@{-}[dl]&&`(!)\\
                &`(?) \ar@{-}[dr]\\
                &&`(!) \ar@{-}[dl]\\
                & `(?) \ar@{-}[dl] \ar@{-}[dr]\\
                `(?)&&`(!)
            }
        &
            \mathsf{\ltobw(Y)}  %
            \xymatrix@=.2pc{
                &&&&&&&`(!)\ar@{-}[dl]\\
                &&&&&&`(?)\ar@{-}[dl]\ar@{-}[dr]\\
                &&&&&`(?)\ar@{-}[dl]\ar@{-}[dr]&&`(!)&&\\
                &&&&{`(?)}\ar@{-}[dl] && `(!)\ar@{-}[dl]\\
                &&&`(?) \ar@{-}[dl]\ar@{-}[dr]&& `(!)\\
                &&`(?)&&`(!)\ar@{-}[dl]\\
                &&&`(?)\ar@{-}[dl]\ar@{-}[dr]\\
                &&`(?)\ar@{-}[dl]\ar@{-}[dr]&&`(!)&&\\
                &{`(?)} && `(!)\ar@{-}[dl]\\
                && `(!)
            }
        \end{array}
    \end{displaymath}
    \end{example}
    
    We provide Haskell implementations of $\ltobw$ and $\bwtol$ which can be
    found at~\cite{mb-haskell-implementation}. Our implementations were
    tested using Quickcheck~\cite{Claessen-2000}.

\subsection{Binary trees without zigzags}\label{subsec:zz-trees}
In this section we are interested in zigzag-free binary trees, i.e.~trees without
a~forbidden \emph{zigzag} subtree:
\begin{displaymath}
  \tiny \xymatrix @=.05pc{&{`*}\ar@{-}[dl]\ar@{-}[dr]\\
    `*\ar@{-}[dr]&&\\%
    &{`*}\ar@{-}[dl]\ar@{-}[dr] \\ &&&&}
\end{displaymath}

Let us denote $\B\Z_1$ the set of zigzag free trees. The above negative criterion
can be stated positively.  Wherever inside such a tree we start from a node by a left
branch and follow only left branches, we get to an isolated node $`*$, i.e.~a leaf.
This description can be translated into the following combinatorial equations:
\begin{eqnarray*}
  \B\Z_1\, &=& \, \raisebox{.4cm}{\xymatrix
               @=.05pc{{`*}\ar@{-}[dr]\\&\B\Z_1}} \ `(+)\  \B\Z_2\\
  \B\Z_2\, &=&\, {`*} \ \ `(+)\ \  \raisebox{.4cm}{\xymatrix
               @=.05pc{&{`*}\ar@{-}[dl]\\ \B\Z_2}} \ `(+)\
  \raisebox{.4cm}{\xymatrix
  @=.05pc{&{`*}\ar@{-}[dl]\ar@{-}[dr]\\\B\Z_2&&\B\Z_1}}  
\end{eqnarray*}
    Similarly to $\gL$ and $BW_{`(!)}(z)$, the generating function $BZ_1(z)$ can be
    expressed as a solution of the functional equation:
    \[ z(1-z) BZ_1^2(z) + {(1-z)}^2 BZ_1(z) + z = 0.\] 
    
    It follows that the sequence ${\left(\nthcoeff{BZ_1(z)}\right)}_{n\in\nats}$ is
    equal to ${\left(\nthcoeff{BW_{`(!)}(z)}\right)}_{n\in\nats}$ and also to
    ${\left(\nthcoeff{\gL}\right)}_{n\in\nats}$, suggesting that appropriate
    bijections exist. We note that Sapounakis et
    al.~\cite{DBLP:journals/dm/SapounakisTT06} consider the same sequence
    defined in terms of constrained Dyck paths and give the following explicit
    formula: 
    \[\nthcoeff{BZ_1(z)} = \nthcoeff{\gL} = \sum_{k=0}^{(n-1) \div 2}
    \frac{{(-1)}^k}{n-k}\binom{n-k}{k}\binom{2n-3k}{n-2k-1}. \]

    \subsection{Bijection between black-white trees and zigzag-free trees}\label{subsec:bijection-zigzag}
    We start by giving a bijective translation $\bwtobz$ from black-white trees to
    zigzag-free ones. For convenience, we use $u_1$ and $u_2$ to denote
    arbitrary (possibly empty) black-white trees.
    
    \begin{align*}
            `(!) \, &\xrightarrow{\bwtobz}\, `* &\\
            \raisebox{.4cm}{\xymatrix@=.05pc{&{`(!)}\ar@{-}[dl]\\ t}}\, 
            &\xrightarrow{\bwtobz}\, \raisebox{.4cm}{\xymatrix
    @=.05pc{{`*}\ar@{-}[dr]\\ &{\scriptsize \bwtobz}(t)}}&
            \text{when~} t = \raisebox{.4cm}{\xymatrix@=.05pc{&{`(!)}
            \ar@{-}[dl]\ar@{-}[dr]\\u_1&& u_2}}\\
            \raisebox{.4cm}{\xymatrix@=.05pc{&{`(!)}\ar@{-}[dl]\\ t}}\,
                                             &\xrightarrow{\bwtobz}\,
                    \bwtobz(t)\, &  \text{when~} t = \raisebox{.4cm}
            {\xymatrix@=.05pc{&{`(?)}\ar@{-}[dl]\ar@{-}[dr]\\u_1&& u_2}}\\
            `(?) \, &\xrightarrow{\bwtobz}\, \raisebox{.4cm}
            {\xymatrix@=.05pc{&{`*}\ar@{-}[dl]\\~{`*}&&}}&\\
            \raisebox{.4cm}{\xymatrix@=.05pc{&{`(?)}\ar@{-}[dl]\\ t}}\, 
            &\xrightarrow{\bwtobz}\, \raisebox{.4cm}
            {\xymatrix@=.05pc{&{`*}\ar@{-}[dl]\\{\scriptsize \bwtobz}(t)&&}}&
            \text{when~} t = \raisebox{.4cm}{\xymatrix@=.05pc{&{`(?)}
            \ar@{-}[dl]\ar@{-}[dr]\\u_1&& u_2}}\\
            \raisebox{.4cm}{\xymatrix@=.05pc{{`(?)}\ar@{-}[dr]\\ 
            &t}}\, &\xrightarrow{\bwtobz}\, 
            \raisebox{.4cm}{\xymatrix@=.05pc{&{`*}\ar@{-}[dl]\ar@{-}[dr]\\
            `*&&{\scriptsize \bwtobz}(t)}} & \text{when~} t = \raisebox{.4cm}
            {\xymatrix@=.05pc{&{`(!)}
            \ar@{-}[dl]\ar@{-}[dr]\\u_1&& u_2}}\\
            \raisebox{.4cm}{\xymatrix@=.05pc{&{`(?)}\ar@{-}[dl]\ar@{-}[dr]\\ t&&t'}} 
            \, &\xrightarrow{\bwtobz}\, \raisebox{.4cm}{\xymatrix@=.05pc{&{`*}\ar@{-}[dl]
            \ar@{-}[dr]\\ {\scriptsize \bwtobz}(t)&&{\scriptsize \bwtobz}(t')}}%
                    &\\
    \end{align*}
    
    \begin{prop}
	Let $t$ be a black-white tree. Then trees $t$ and $\bwtobz(t)$ are of equal size.
	\end{prop}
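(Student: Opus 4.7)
The plan is to proceed by structural induction on $t$, but the statement does not close in the form written: already the base rule sending $\circ$ to a two-node zigzag-free tree strictly increases the node count, so a uniform invariant $|\bwtobz(t)|=|t|$ cannot hold for every black-white tree. I therefore strengthen the claim to a pair of statements proved simultaneously: $|\bwtobz(t)|=|t|$ whenever $t\in\B\W_\bullet$, and $|\bwtobz(t)|=|t|+1$ whenever $t\in\B\W_\circ$. The proposition follows from the first clause, under the standing convention that an unqualified black-white tree is black-rooted.

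The base cases are the two leaf rewrites: the leaf $\bullet$ is sent to a one-node tree, giving $|\bwtobz(\bullet)|=1=|\bullet|$; the leaf $\circ$ is sent to a two-node tree, giving $|\bwtobz(\circ)|=2=|\circ|+1$. For the inductive step I would dispatch on the colour and local shape at the root of $t$. The recursive definitions of $\B\W_\bullet$ and $\B\W_\circ$ from Section~\ref{subsec:bw-trees} ensure exhaustiveness: the three rewrite rules with a black top cover every $t\in\B\W_\bullet$ (leaf; left child $\bullet$-rooted; left child $\circ$-rooted), and the four rewrite rules with a white top cover every $t\in\B\W_\circ$ (leaf; only a left child; only a right child; both children).

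The crux is the interplay of the rules that exchange colour at the root. In the rule with a $\bullet$ top over a $\circ$-rooted subtree $t'$, the top $\bullet$ is discarded and the output equals $\bwtobz(t')$; by the strengthened hypothesis this has size $|t'|+1$, which matches $1+|t'|$ on the left. Dually, the two white-rooted rules that introduce a fresh output node beyond the subtree recursion (the leaf rule for $\circ$ and the rule acting on a $\circ$ with only a right child) contribute precisely the one extra node demanded by the $+1$ clause. The remaining two $\circ$-rules propagate that offset unchanged, and the remaining $\bullet$-rule is manifestly size-preserving, so every case reduces to a one-line arithmetic check under the inductive hypothesis.

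The main obstacle is conceptual rather than computational: one has to spot that the correct invariant is not literal size-preservation on every subtree but the asymmetric ``$+1$'' on white-rooted subtrees, which is precisely the surplus that the rule removing a $\bullet$ above a $\circ$-rooted subtree subsequently cancels. Once that invariant is identified, the induction closes immediately.
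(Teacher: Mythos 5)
Your proof is correct and rests on the same observation as the paper's: the black node discarded by the rule that sends a $\bullet$ above a white-rooted subtree to the translation of that subtree is recovered because the translation of any white-rooted tree gains exactly one node, introduced at the bottom of its leftmost white spine by the rules for a white leaf and for a white node with only a right child. You merely package this as an explicit strengthened induction invariant ($+1$ on white-rooted subtrees, $0$ on black-rooted ones), which makes the case analysis tighter than the paper's informal ``eventually falls into the fourth or the sixth equation'' phrasing, but the underlying compensation argument is identical.
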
    
	
	\begin{proof}
	Let us notice that it suffices to consider the case $\bwtobz \big( \large \lf{t}{`(!)} \big)$, 
	since it results in subtracting one black node. Because the
	root of $t$ is white, the next translation step is done according to
	one of the last four rules, which eventually falls into either the fourth 
	or the sixth equation. Since both of them enforce adding one additional 
	$`*$, the total number of nodes is preserved.
	\end{proof}  

	What remains is to give the inverse translation,
    which we present as two mutually recursive functions $\bztobw_{`(!)}$ and
    $\bztobw_{`(?)}$:
    \begin{align*}
            `* \, &\xrightarrow{\bztobw_{`(!)}}\, `(!) &\\
            \raisebox{.4cm}{\xymatrix@=.05pc{{`*}\ar@{-}[dr]\\&t}}
            \, &\xrightarrow{\bztobw_{`(!)}}\, \raisebox{.4cm}{\xymatrix@=.05pc{&{`(!)}
            \ar@{-}[dl]\\\bztobw_{`(!)}(t)}}&\\
            \raisebox{.4cm}{\xymatrix@=.05pc{&{`*}\ar@{-}[dl]\\~{`*}&&}}
            \, &\xrightarrow{\bztobw_{`(!)}}\, \raisebox{.4cm}{\xymatrix@=.05pc{&&{`(!)}
            \ar@{-}[dl]\\&`(?)}}&\\
            \raisebox{.4cm}{\xymatrix@=.05pc{&{`*}\ar@{-}[dl]\\t&&}}
            \, &\xrightarrow{\bztobw_{`(!)}}\, \raisebox{.4cm}{\xymatrix@=.05pc{&&{`(!)}
            \ar@{-}[dl]\\&~`(?)\ar@{-}[dl]\\~\bztobw_{`(?)}(t)&&
            }}&\\
            \raisebox{.4cm}{\xymatrix@=.05pc{&{`*}\ar@{-}[dl]\ar@{-}[dr]\\{`*}&&t}}
            \, &\xrightarrow{\bztobw_{`(!)}}\, \raisebox{.4cm}{\xymatrix@=.05pc{&&{`(!)}
            \ar@{-}[dl]\\&`(?)\ar@{-}[dr]\\&&
            \bztobw_{`(!)}(t)}}&\\
            \raisebox{.4cm}{\xymatrix@=.05pc{&{`*}\ar@{-}[dl]\ar@{-}[dr]\\t&&t'}}
            \, &\xrightarrow{\bztobw_{`(!)}}\, \raisebox{.4cm}{\xymatrix@=.05pc{&&{`(!)}
            \ar@{-}[dl]\\&`(?)\ar@{-}[dl]\ar@{-}[dr]\\\bztobw_{`(?)}(t)&&
            \bztobw_{`(!)}(t')}}&
    \end{align*}
        
    \begin{align*}
            \raisebox{.4cm}{\xymatrix@=.05pc{&~{`*}\ar@{-}[dl]\\`*&&}}
            \, &\xrightarrow{\bztobw_{`(?)}}\, `(?)&\\
            \raisebox{.4cm}{\xymatrix@=.05pc{&{`*}\ar@{-}[dl]\ar@{-}[dr]\\~`*&&t}}
            \, &\xrightarrow{\bztobw_{`(?)}}\,
            \raisebox{.4cm}{\xymatrix@=.05pc{&{`(?)}\ar@{-}[dr]\\
    &&\bztobw_{`(!)}(t)}} &\\
            \raisebox{.4cm}{\xymatrix@=.05pc{&{`*}\ar@{-}[dl]\\t&&}}
            \, &\xrightarrow{\bztobw_{`(?)}}\,
            \raisebox{.4cm}{\xymatrix@=.05pc{&{`(?)}\ar@{-}[dl]\\
    \bztobw_{`(?)}(t)&&}} &\\
            \raisebox{.4cm}{\xymatrix@=.05pc{&{`*}\ar@{-}[dl]\ar@{-}[dr]\\t&&t'}}
            \, &\xrightarrow{\bztobw_{`(?)}}\,
            \raisebox{.4cm}{\xymatrix@=.05pc{&{`(?)}\ar@{-}[dl]\ar@{-}[dr]\\
    \bztobw_{`(?)}(t)&&\bztobw_{`(!)}(t')}} &
    \end{align*}
    
    \begin{prop}
	Let $t$ be a zigzag-free tree. Then trees $t$ and $\bztobw_{`(!)}(t)$ 
	are of equal size.
	\end{prop}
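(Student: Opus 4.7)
My plan is to prove the statement by mutual strong induction, simultaneously establishing the stronger auxiliary claim that $s(\bztobw_{`(?)}(t)) = s(t) - 1$ for every zigzag-free tree $t$ in the domain of $\bztobw_{`(?)}$, where $s(\cdot)$ denotes the number of nodes. The source of the $-1$ offset is immediate from the first rule for $\bztobw_{`(?)}$: the two-node left-chain is collapsed into a single white node $`(?)$, losing exactly one node. Every other $\bztobw_{`(?)}$ rule then propagates this deficit through a single recursive subcall, which is verified by a short arithmetic bookkeeping in each case.

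Granted the auxiliary claim, I would then dispose of the main statement by case analysis over the six rules defining $\bztobw_{`(!)}$. The only cases needing thought are those where the target introduces a fresh $`(?)$ immediately below the new $`(!)$ root, seemingly adding an extra node. In every such case the construction makes a recursive call to $\bztobw_{`(?)}$ on a subtree, and the $-1$ contribution guaranteed by the auxiliary claim exactly compensates for the inserted $`(?)$. For instance, in the last rule the target has size
\[ 2 + s(\bztobw_{`(?)}(t)) + s(\bztobw_{`(!)}(t')) = 2 + (s(t) - 1) + s(t') = 1 + s(t) + s(t'), \]
matching the size of the source. Base cases and rules that merely relabel the root and forward to a single $\bztobw_{`(!)}$ call are settled immediately by the inductive hypothesis.

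Before running the induction I would confirm that the two sets of rewriting rules exhaustively and unambiguously cover the domains $\B\Z_1$ and $\B\Z_2$ described in Section~\ref{subsec:zz-trees}: at every node one splits on the presence of a left child, a right child, and, where relevant, on whether the left child is itself a solitary leaf. The main obstacle is purely clerical, namely keeping track of these implicit side conditions that distinguish rules whose source patterns look superficially identical. Once the partition is verified, the mutual induction closes with one line of arithmetic per rule and yields both $s(\bztobw_{`(!)}(t)) = s(t)$ and the auxiliary identity $s(\bztobw_{`(?)}(t)) = s(t) - 1$.
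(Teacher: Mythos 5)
Your proposal is correct and matches the paper's own argument: the paper likewise observes that the fourth and sixth rules of $\bztobw_{`(!)}$ insert an extra white node, which is exactly compensated because $\bztobw_{`(?)}$ always produces a tree with one node fewer (the deficit originating in its first and second rules and propagating through the others). You merely make explicit the mutual induction and the auxiliary invariant $s(\bztobw_{`(?)}(t)) = s(t) - 1$ that the paper's terser proof leaves implicit.
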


	\begin{proof}
    The fourth and the sixth equations defining $\bztobw_{`(!)}$ introduce an
    additional white node $`(?)$, but since both the first and the second equations
    of $\bztobw_{`(?)}$ remove one node, the overall tree size is preserved.
	\end{proof}
	
	\begin{prop}
            Both $\bztobw_{`(!)}$ and $\bwtobz$ are mutually inverse bijections, i.e.
            \[\bztobw_{`(!)} \circ \bwtobz = id_{\B\W_{`(!)}} \qquad \text{and}
                            \qquad \bwtobz \circ
                    \bztobw_{`(!)} = id_{\B\Z}.\]
    \end{prop}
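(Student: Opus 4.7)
The plan is to prove both equalities by mutual structural induction on the number of nodes of the input tree. Because $\bwtobz$ is defined uniformly on arbitrary black-white trees (whether the root is $`(!)$ or $`(?)$), and because $\bztobw_{`(!)}$ calls $\bztobw_{`(?)}$ on some of its subtrees (and vice versa), the two composition identities in the statement cannot be proven in isolation. One instead establishes the following four claims simultaneously:
\begin{itemize}
\item[(A)] $\bztobw_{`(!)}(\bwtobz(t)) = t$ for every $t \in \B\W_{`(!)}$;
\item[(B)] $\bztobw_{`(?)}(\bwtobz(t)) = t$ for every $t \in \B\W_{`(?)}$;
\item[(C)] $\bwtobz(\bztobw_{`(!)}(s)) = s$ for every $s \in \B\Z_1$;
\item[(D)] $\bwtobz(\bztobw_{`(?)}(s)) = s$ for every $s$ in the domain of $\bztobw_{`(?)}$.
\end{itemize}
The statement of the proposition is precisely (A) together with (C). The two size-preservation propositions proven immediately above guarantee that every recursive call in either map is on a strictly smaller tree, so induction on the number of nodes is well-founded and the inductive hypothesis is available at every subtree encountered.

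For (A) I would decompose $t$ according to the three productions of $\B\W_{`(!)}$. The leaf $`(!)$ is immediate. The case of a black node with a black-rooted left subtree $u$ matches case~2 of $\bwtobz$ against case~2 of $\bztobw_{`(!)}$ and closes by the IH on the strictly smaller $u$. The more delicate case is a black node with a white-rooted left subtree $u$: case~3 of $\bwtobz$ simply drops the outer black node and descends into $u$, so the inverse must reintroduce the missing black cap. It does so through one of the cases~3, 4, 5, or 6 of $\bztobw_{`(!)}$, selected according to which of the four productions of $\B\W_{`(?)}$ defines $u$. Claim~(B) is analogous, splitting into four sub-cases aligned one-to-one with the four cases of $\bztobw_{`(?)}$. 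Claims (C) and (D) are the mirror images: decompose $s$ along the productions of $\B\Z_1$ respectively $\B\Z_2$, observe that in every case the top-level shape of $\bztobw_{`(!)}(s)$ respectively $\bztobw_{`(?)}(s)$ is exactly what triggers the corresponding defining equation of $\bwtobz$, and conclude by the induction hypothesis.

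The main obstacle is the bookkeeping of the non-local case~3 of $\bwtobz$: since it deletes an outer black node without any local compensation, its inverse is distributed across four cases of $\bztobw_{`(!)}$. Concretely, one must verify that for each of the four shapes that a white-rooted $u$ can take (a leaf, a single-left-child tree, a single-right-child tree, or a binary branch), the top of $\bwtobz(u)$ lands in exactly the defining case of $\bztobw_{`(!)}$ that prepends the missing black cap and whose further recursive calls are precisely the ones covered by IH~(A) or IH~(B) applied to the structural children of $u$. Once this pairing of cases is explicitly tabulated, every remaining step is a one-line unfolding of the relevant defining equation followed by a direct appeal to the induction hypothesis, and the symmetric bookkeeping for cases (C) and (D) goes through by the same pattern.
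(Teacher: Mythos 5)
Your proof is correct, but be aware that the paper does not actually prove this proposition: it is stated without proof, and the authors instead point to Haskell implementations of $\bwtobz$, $\bztobw_{`(!)}$ and $\bztobw_{`(?)}$ tested with QuickCheck. Your mutual structural induction on the four claims (A)--(D) is the right way to turn the assertion into a theorem, and the strengthening is genuinely necessary: since $\bztobw_{`(!)}$ and $\bztobw_{`(?)}$ call each other, and since $\bwtobz$ applied to a black node with a white-rooted left child recurses into that child without local compensation, the two identities of the proposition cannot be closed under induction without the companion statements about $\bztobw_{`(?)}$. You also correctly isolate the one non-trivial verification: for a white-rooted $u$ the tree $\bwtobz(u)$ always has a left child (so the first two defining equations of $\bztobw_{`(!)}$ never fire after the dropped black cap), and that left child is a leaf exactly when $u$ itself has no left child, which routes the computation into the correct one of the remaining four cases of $\bztobw_{`(!)}$ (respectively the four cases of $\bztobw_{`(?)}$). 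Two details worth making explicit in a write-up: claim (D) should be stated for the non-leaf trees of $\B\Z_2$, which is exactly the domain of $\bztobw_{`(?)}$ and exactly the image of $\B\W_{`(?)}$ under $\bwtobz$; and to conclude ``bijection'' rather than merely the two composition identities, one should record (by the same case analysis) that $\bwtobz$ indeed lands in $\B\Z_1$ and $\bztobw_{`(!)}$ in $\B\W_{`(!)}$. Both are routine. What your approach buys over the paper's is an actual proof; what the paper's buys is brevity plus executable, machine-checked evidence short of a guarantee.
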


    \begin{example}
    Let us present the zigzag-free tree corresponding to the aforementioned
    black-white tree associated with $\mathsf{`W}$:
    \begin{displaymath}
        \begin{array}{l@{\qquad\qquad}l@{\qquad\qquad}l}
            \mathsf{\ltobw(`W)}  %
            \xymatrix@=.2pc{
                &&&&`(!) \ar@{-}[dl]\\
                &&&`(?) \ar@{-}[dl] \ar@{-}[dr]\\
                &&`(?) \ar@{-}[dl]&&`(!)\\
                &`(?) \ar@{-}[dr]\\
                &&`(!) \ar@{-}[dl]\\
                & `(?) \ar@{-}[dl] \ar@{-}[dr]\\
                `(?)&&`(!)
            }
        &
            \mathsf{\bwtobz(\mathsf{\ltobw(`W)})}  %
            \xymatrix@=.2pc{
                &&&&`*\ar@{-}[dl]\ar@{-}[dr]&\\
                &&&`*\ar@{-}[dl]&&`*\\
                &&`*\ar@{-}[dl]\ar@{-}[dr]\\
                &`*&&`*\ar@{-}[dl]\ar@{-}[dr]\\
                &&`*\ar@{-}[dl]&&`*\\
                &`*\\
            }
        \end{array}
    \end{displaymath}
    \end{example}
    
    We provide Haskell implementations of $\bwtobz$, $\bztobw_{`(!)}$ and 
    $\bztobw_{`(?)}$ which can be found at~\cite{mb-haskell-implementation}.
    Our implementations were tested using Quickcheck~\cite{Claessen-2000}.

\subsection{Neutral $`l$-terms and $\beta$-normal forms}\label{subsec:nf-lambda-terms}
Here we are interested in the class $\N$ of $\beta$-normal forms, i.e.~$`l$-terms which do not have subterms of the form $(`l N)\,M$, and the associated class $\M$ of neutral terms, i.e.~normal forms without head abstractions. We start by giving a combinatorial specification of normal forms involving the class $\M$ of neutral terms:
\begin{eqnarray*}
  \N &=& \M `(+) `l\, \N\\
  \M &=& \M \N `(+) \D  \\
  \D &=& S\, \D `(+) \zerodot
\end{eqnarray*}

Normal forms either are neutral or start with a head abstraction. Neutral terms, in turn, are either de~Bruijn indices, or are in form of an application of a neutral term to a normal form. The above specification yields the following system of equations
for the corresponding generating functions:
\begin{eqnarray*}
  N(z) &=& M(z) + z N(z),\\
  M(z) &=& z M(z) N(z) + D(z),\\
  D(z) &=& z D(z) + z.
\end{eqnarray*}
Solving this system, we obtain the following generating functions:
\begin{eqnarray*}
  M(z) &=& \frac{1-z-\sqrt{(1+z)(1-3z)}}{2z},\\
  N(z) &=& \frac{M(z)}{1-z}.
\end{eqnarray*}

Note that $M(z)$ is the generating function corresponding to the counting sequence of Motzkin numbers (see, e.g.~\cite[p.~396]{Flajolet:2009:AC:1506267}), for convenience denoted henceforth as $\T$. Naturally, it means that there exists a size-preserving bijection between Motzkin trees and neutral forms.

\subsection{Bijection between Motzkin trees and neutral forms}\label{subsec:bijection-motzkin-trees}
Let $u_n$ denote the unary Motzkin path of size $n > 0$. We start by defining two auxiliary operations \textsf{UnToL} and \textsf{UnToD}, translating unary Motzkin paths into $\lambda$-paths and de~Bruijn indices, respectively:
    
    \begin{align*}
    `(!) \, &\xrightarrow{\untol}\, `l & `(!) \, &\xrightarrow{\untod}\, \zerodot\\
    \raisebox{.4cm}{\upath{\bullet}{u_n}} \, &\xrightarrow{\untol}\, \raisebox{.6cm}{\upath{\lambda}{\untol\left(u_n\right)}} & \raisebox{.4cm}{\upath{\bullet}{u_n}} \, &\xrightarrow{\untod}\, \raisebox{.6cm}{\upath{S}{\untod\left(u_n\right)}}\\
    \end{align*}

Using \textsf{UnToL} and \textsf{UnToD} we can now define a bijective translation \textsf{MoToNe} from Motzkin trees to corresponding neutral terms:

\begin{align*}
	u_n \, &\xrightarrow{\motone}\, \untod\left(u_n\right)\\
    \raisebox{.9cm}{\bcross{u_n}{t}{t'}} \, &\xrightarrow{\motone}\, \raisebox{.9cm}{\btcross{@}{\motone\left(t\right)}{\untol\left(u_n\right)}{\motone\left(t'\right)}}\\
    \raisebox{.5cm}{\bnode{\bullet}{t}{t'}} \, &\xrightarrow{\motone}\, \raisebox{.6cm}{\bnode{@}{\motone\left(t\right)}{\motone\left(t'\right)}}
    \end{align*}

\begin{prop}
    $\motone$ is a bijection.
\end{prop}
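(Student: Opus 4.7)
The plan is to construct an explicit inverse $\netomo : \M \to \T$ and verify that the two maps compose to the identity on either side. The three clauses defining $\motone$ partition Motzkin trees by the shape of their topmost unary prefix: either a pure unary path ending at a leaf, or a unary path of length $n > 0$ sitting atop a binary branch, or a binary branch at the very root. The inverse must recover exactly this decomposition from the syntactic form of a neutral term.

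The essential preparatory observation is a unique prefix decomposition for normal forms: every $N \in \N$ may be written uniquely as $N = \lambda^{k}\, R$ with $k \geq 0$ and $R \in \M$, which is immediate from $\N = \M\, `(+)\, `l\, \N$ by iteration. Using this, I would define $\netomo$ as follows. On a de~Bruijn index $\underbar{\sf n}$, declare $\netomo(\underbar{\sf n}) = \dtoun(\underbar{\sf n})$, where $\dtoun$ is the evident inverse of $\untod$. On an application $P_1\, Q$ with $P_1$ neutral and $Q$ normal, invoke the decomposition lemma to factor $Q = \lambda^{k}\, R$ and take $\netomo(P_1\, Q)$ to be the Motzkin tree obtained by stacking a unary path of $k$ nodes on top of a binary node with children $\netomo(P_1)$ and $\netomo(R)$; when $k = 0$ this yields the bare binary root of case~3 of $\motone$, and when $k > 0$ the unary-over-binary shape of case~2.

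I would then check $\netomo \circ \motone = \mathrm{id}_{\T}$ and $\motone \circ \netomo = \mathrm{id}_{\M}$ by mutual structural induction. Case~1 of $\motone$ reduces to $\dtoun \circ \untod = \mathrm{id}$ on unary paths. In cases~2 and~3 the outer constructor of $\motone(t)$ is an application whose right argument is exactly $\lambda^{n}\motone(t')$, so the decomposition lemma peels off precisely $n$ abstractions and the two induction hypotheses rebuild the subtrees correctly. The reverse composition is handled symmetrically: the only nontrivial case is the application clause, where the uniqueness of $Q = \lambda^{k}\, R$ guarantees that $\netomo$ strips off exactly as many $`l$'s as $\motone$ will subsequently reinstall. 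Size preservation is automatic, clause by clause, with Motzkin leaves corresponding to $\zerodot$, unary Motzkin nodes to either $S$ or $`l$, and binary Motzkin nodes to applications.

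The only genuine difficulty is the uniqueness clause of the decomposition $N = \lambda^{k}\, R$: without it, $\netomo$ would be ambiguous about how many $`l$-abstractions to absorb into the top unary path of the Motzkin tree versus how many to retain inside the recursive call on the right subterm. Once that lemma is in hand, the remainder is routine structural induction, entirely parallel to the verifications already carried out for $\ltobw/\bwtol$ and $\bwtobz/\bztobw_{`(!)}$.
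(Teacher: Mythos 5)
Your proposal is correct and follows essentially the same route as the paper: the paper likewise establishes bijectivity by exhibiting the explicit inverse $\netomo$ (defined immediately after the proposition, with the side condition that $t'$ does not start with a head $\lambda$ playing the role of your unique decomposition $N = \lambda^{k}R$ with $R \in \M$) and asserting $\motone \circ \netomo = id_{\M}$ and $\netomo \circ \motone = id_{\T}$. You spell out the key uniqueness lemma and the mutual induction that the paper leaves implicit, which is a faithful completion rather than a different argument.
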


\begin{proof}
    The proposition is an easy consequence of the fact that
    $\motone$ preserves the exact number of unary and binary nodes.
\end{proof}

In order to translate Motzkin trees to corresponding neutral terms we have to 
consider two cases. Either we are given a Motzkin tree starting with
a unary node or a Motzkin tree starting with a binary node. The second case is straightforward due to the fact
that binary nodes correspond to neutral term applications. Assume we are given a
Motzkin tree starting with a unary path $u_n$ of size $n$. We have to decide
whether the path corresponds to a de~Bruijn index or to a chain of
$`l$-abstractions. This distinction is uniquely determined by the existence of
the path's \emph{splitting node} -- the binary node directly below $u_n$. If
$u_n$ has a splitting node, then it corresponds to a chain of $n$
$`l$-abstractions which will be placed on top of the corresponding right neutral
term constructed recursively from $u_n$'s splitting node. Otherwise, $u_n$
corresponds to the $n$th de~Bruijn index.

\medskip
What remains is to give the inverse translation \textsf{NeToMo} from neutral terms to Motzkin trees. Let \textsf{LToUn} and \textsf{DToUn}
denote the inverse functions of \textsf{UnToL} and \textsf{UnToD}, respectively. Let
$l_n$ denote the unary $\lambda$-path of size $n > 0$. The translation \textsf{NeToMo} is given by:

\begin{align*}
	\underline{n} \, &\xrightarrow{\netomo}\, \dtoun\left(\underline{n}\right)\\
    \raisebox{.9cm}{\btcross{@}{t}{l_n}{t'}} \, &\xrightarrow{\netomo}\raisebox{.9cm}{\bcross{\ltoun\left(l_n\right)}{\netomo\left(t\right)}{\netomo\left(t'\right)}}\\
    & \qquad \qquad \qquad \text{where } t' \text{ does not start with a head } \lambda\\
    \raisebox{.6cm}{\bnode{@}{t}{t'}} \, &\xrightarrow{\netomo}\, \raisebox{.5cm}{\bnode{\bullet}{\netomo\left(t\right)}{\netomo\left(t'\right)}}
    \end{align*}

\begin{prop}
  $\motone `(?)\netomo = id_{\M}$ and $\netomo `(?)\motone = id_{\T}$.
\end{prop}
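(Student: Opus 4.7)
The plan is to establish both equalities by structural induction, after first isolating two auxiliary bijections on unary paths. By a short induction on path length, the pair $(\untol, \ltoun)$ restricts to mutually inverse bijections between unary Motzkin paths and unary $\lambda$-paths, and the pair $(\untod, \dtoun)$ restricts to mutually inverse bijections between unary Motzkin paths and de~Bruijn indices; both facts follow directly from the defining clauses, as each step peels off exactly one unary constructor and replaces it by the corresponding one on the other side.

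For $\netomo \circ \motone = id_{\T}$, I induct on the Motzkin tree $t$ according to the three clauses of $\motone$. If $t$ is a pure unary path $u_n$, then $\motone(t) = \untod(u_n)$ is a de~Bruijn index, handled by the first clause of $\netomo$, and the auxiliary inverse returns $\dtoun(\untod(u_n)) = u_n$. If $t$ has a unary path $u_n$ sitting above a binary node with children $t_L, t_R$, then $\motone(t)$ is an application whose right subterm is $\untol(u_n)$ placed atop $\motone(t_R)$. The crucial observation here is that $\motone(t_R)$ does not begin with a head $\lambda$, since every output of $\motone$ is either a de~Bruijn index or an application; hence the side condition in the second clause of $\netomo$ is satisfied, the $\lambda$-path $\untol(u_n)$ is stripped in one piece, and the inductive hypothesis on $t_L$ and $t_R$ reassembles $t$. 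If $t$ is a binary node with children $t_L, t_R$ alone, then $\motone(t)$ is a plain application handled by the third clause of $\netomo$, and the inductive hypothesis closes this case immediately.

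For $\motone \circ \netomo = id_{\M}$, I induct on the neutral term. By the grammar $\M = \M\N \,`(+)\, \D$, every neutral term is either a de~Bruijn index or an application whose left subterm is itself neutral; in particular, the left subterm never begins with $\lambda$. The de~Bruijn case is immediate from the auxiliary inverse $\untod \circ \dtoun = id$. For an application $M\,N$, I would write $N$ uniquely as a maximal head $\lambda$-path $l_n$ above some $N'$ that does not begin with $\lambda$. If $n > 0$ this falls into the second clause of $\netomo$, producing a Motzkin tree whose top unary path is $\ltoun(l_n)$ and whose two children below the splitting binary node are $\netomo(M)$ and $\netomo(N')$; applying $\motone$ and invoking the inductive hypothesis together with $\untol \circ \ltoun = id$ recovers $M\,N$. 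If $n = 0$ the third clause of $\netomo$ applies, and the verification is analogous but simpler.

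The main obstacle is the application clause of $\netomo$: one has to verify that the side condition ``$t'$ does not start with a head $\lambda$'' makes the decomposition of the right-hand subterm unique and matches precisely the form produced by $\motone$ on a unary-over-binary subtree. Once one observes that $\motone$ can introduce head $\lambda$'s only through an $\untol(u_n)$ factor---while any recursive call $\motone(t_R)$ necessarily starts with $S$, $\zerodot$, or an application constructor---the remainder of the induction unfolds mechanically from the three defining clauses and the two auxiliary bijections.
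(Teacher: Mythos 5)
Your proof is correct. The paper states this proposition without an accompanying proof, so there is nothing to compare against beyond the informal discussion of splitting nodes that precedes it; your structural induction supplies exactly the justification that is left implicit. In particular, the one point that genuinely needs an argument --- that the side condition ``$t'$ does not start with a head $\lambda$'' in the application clause of $\netomo$ is automatically satisfied on the image of $\motone$, because every output of $\motone$ is a de~Bruijn index or an application and therefore never carries a head abstraction --- is the right key observation: it is what makes the maximal-$\lambda$-prefix decomposition of the right subterm unique and forces the case analyses of $\motone$ and $\netomo$ to match, with the remaining cases reducing to the routine inverses $\untol$/$\ltoun$ and $\untod$/$\dtoun$ on unary paths.
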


\begin{example}
	Consider the neutral term $P = \zerodot \left(`l `l \zerodot \left(S\, \zerodot\right)\right)$. The following figure presents $P$ and its Motzkin tree counterpart through the translation $\motone$.
    \begin{displaymath}
        \begin{array}{l@{\qquad\qquad}l@{\qquad\qquad}l}
            \text{Neutral term } P  %
            \xymatrix@=.3pc{
            	   & @ \ar@{-}[dl] \ar@{-}[dr] & & &\\
            	   \zerodot & & `l \ar@{-}[d] & &\\
            	   & & `l \ar@{-}[d] & &\\
            	   & & @ \ar@{-}[dl] \ar@{-}[dr] & &\\
            	   & \zerodot & & S \ar@{-}[d] & &\\
            	   & & & \zerodot &
            }
        &
            \mathsf{\motone(\mathsf{P})}  %
            \xymatrix@=.5pc{
            	   & `(!) \ar@{-}[d] & & &\\
            	   & `(!) \ar@{-}[d] & & &\\
            	   & `(!) \ar@{-}[dl] \ar@{-}[dr] & & &\\
            	   `(!) & & `(!) \ar@{-}[dl] \ar@{-}[dr] & &\\
            	   & `(!) & & `(!) \ar@{-}[d] &\\
            	   & & & `(!) &
            }
        \end{array}
    \end{displaymath}
    \end{example}
    
Let us notice that the simple translation $\netomo$ allows us to design an effective exact-size sampler for neutral $`l$-terms in the natural size notion, employing the sampler for Motzkin trees of Bacher et al.~\cite{BBJ}. Given a number $n \in \nats$, we sample a uniformly random Motzkin tree of size $n$, constructing a corresponding neutral $`l$-term out of it using the $\netomo$ translation. The resulting outcome is clearly a uniformly random neutral $`l$-term of size $n$. As our translation is linear in time and space, the overall complexity of the described sampler is, on average, linear in both time and space.

\subsection{Head normal forms}\label{subsec:head-normal-forms}
In this section we are interested in counting head normal forms, i.e.~$`l$-terms without head redexes and the associated auxiliary set $\K$ of neutral head normal forms, as defined by the following combinatorial specification:
\begin{eqnarray*}
  \calH &=& \K `(+) `l \calH\\
  \K &=& \K \cgL `(+) \D
\end{eqnarray*}

A head normal form either starts with a head $`l$-abstraction followed by another head normal form, or is a neutral head normal form. In the latter case, it must be a de~Bruijn index or an application of a neutral head normal form to an arbitrary $`l$-term. Translating the above specification into a corresponding system of functional equations we obtain:
\begin{eqnarray*}
  H(z) &=& K(z) + z H(z),\\
  K(z) &=& z K(z) \gL + D(z)
\end{eqnarray*}
and hence
\begin{eqnarray}\label{eq:H-hnf}
  \nonumber K(z) &=& \frac{D(z)}{1 - z \gL},\\
  H(z) &=&  \frac{K(z)}{1-z}.
\end{eqnarray}
It is easy to verify that we have
\begin{eqnarray}\label{eq:K-nhnf}
  K(z) &=& z +z \gL.  
\end{eqnarray}

Naturally, the above equation suggests an appropriate translation between the set of neutral head normal forms and the set of plain $`l$-terms. Consider the following partial mapping $\K \mapsto \cgL$:
\begin{align*}
\zerodot\, N_1 N_2\ldots N_m \, &\longleftrightarrow\, (`l\, N_1) N_2\ldots N_m & \text{where $m > 0$}\\
(S \underbar{\sf n})\, N_1 \ldots N_m \, &\longleftrightarrow\, \underbar{\sf n}\, N_1 \ldots N_m & \text{where $m \geq 0$}
\end{align*}

Note that the neutral head normal forms are of size by one greater than the size of their plain $`l$-term counterparts. Since each plain $`l$-term is either in form of $(`l\, N_1) N_2\ldots N_m$ for some $m > 0$ or $\underbar{\sf n} \, N_1 \ldots N_m$ (note that in this case $m$ can be equal to $0$), the above mapping is surjective, explaining the $z \gL$ part in~\autoref{eq:K-nhnf}. The $z$ part comes from the fact that the only $`l$-term in neutral head normal form not covered by the mapping is $\zerodot$, which is of size one.

Immediately, from~\autoref{th:Loo-approx} we get the following results.
\begin{prop}
  The asymptotic approximation of the number of $`l$-terms in neutral head normal form of size $n+1$ is given by
  \begin{displaymath}
    [z^{n+1}]K(z) \sim \left(\frac{1}{\domsing{L_\infty}}\right)^n \frac{C}{n^{\frac{3}{2}}}
  \end{displaymath}
  with $C \doteq 0.60676$ and $\domsing{L_\infty} \doteq 0.29559$.
\end{prop}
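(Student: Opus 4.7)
The plan is to observe that \autoref{eq:K-nhnf} gives an essentially trivial relation between $K(z)$ and $\gL$, namely $K(z) = z + z\gL$. Extracting coefficients, for $n \geq 1$ we obtain
\[
[z^{n+1}] K(z) = [z^{n+1}]\bigl(z + z\gL\bigr) = [z^n]\gL,
\]
since the lone $z$ term contributes only to $[z^1]$ and multiplication by $z$ shifts indices by one. In other words, apart from the size-$1$ term $\zerodot$, the sequence counting neutral head normal forms is nothing but a shift of the sequence counting plain $\lambda$-terms, exactly as the bijective argument around~\autoref{eq:K-nhnf} explains combinatorially.

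Once this identification is made, the asymptotic statement is an immediate corollary of~\autoref{th:Loo-approx}. The latter already provides
\[
[z^n]\gL \sim \left(\frac{1}{\domsing{L_\infty}}\right)^n \frac{C}{n^{3/2}}, \qquad C \doteq 0.60676, \qquad \domsing{L_\infty} \doteq 0.29559,
\]
and substituting $[z^{n+1}]K(z) = [z^n]\gL$ yields the claimed estimate verbatim, with the same constants $C$ and $\domsing{L_\infty}$.

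So the entire proof collapses to two lines: apply the shift identity $[z^{n+1}]K(z) = [z^n]\gL$ derived from~\autoref{eq:K-nhnf}, then invoke~\autoref{th:Loo-approx}. There is no real obstacle here: no new singularity analysis is required, the dominant singularity of $K(z)$ coincides with that of $\gL$ because the extra factor $z$ is entire and nonvanishing at $\domsing{L_\infty}$, and the sub-exponential factor $n^{-3/2}$ together with the multiplicative constant $C$ transfer unchanged. The only thing to double-check is that the index shift $n \mapsto n+1$ does not alter the leading asymptotics, which is obvious since $\bigl((n+1)/n\bigr)^{-3/2} \to 1$.
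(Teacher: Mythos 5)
Your proposal is correct and matches the paper's own (implicit) argument exactly: the paper likewise derives the proposition ``immediately'' from the identity $K(z) = z + z\gL$ of~\autoref{eq:K-nhnf}, which gives the coefficient shift $[z^{n+1}]K(z) = [z^n]\gL$ for $n \geq 1$, and then invokes~\autoref{th:Loo-approx}. No further comment is needed.
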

In particular, we obtain the following easy consequence.
\begin{corollary}
  The density of neutral head normal forms in the set of plain terms equal to $\domsing{L_\infty}$.
\end{corollary}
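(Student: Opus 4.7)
The plan is to turn the density computation into a ratio of consecutive coefficients of $\gL$. From~\autoref{eq:K-nhnf} we already have the closed form $K(z) = z + z\gL$, which immediately gives $\nthcoeff{K(z)} = [z^{n-1}]\gL$ for every $n \geq 2$. Hence the density of neutral head normal forms among plain $`l$-terms equals
\[
\lim_{n \to \infty} \frac{\nthcoeff{K(z)}}{\nthcoeff{\gL}} = \lim_{n \to \infty} \frac{[z^{n-1}]\gL}{\nthcoeff{\gL}},
\]
and it suffices to evaluate the right-hand side.

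Next, I would invoke the asymptotic estimate $\nthcoeff{\gL} \sim C\,\domsing{L_\infty}^{-n}\,n^{-3/2}$ established in~\autoref{th:Loo-approx}. Substituting this into the above ratio, both the constant $C$ and the sub-exponential factor $n^{-3/2}$ cancel up to a factor of $(n/(n-1))^{3/2}$, which tends to $1$, while the exponential factor contributes a single $\domsing{L_\infty}$. Explicitly,
\[
\frac{[z^{n-1}]\gL}{\nthcoeff{\gL}} \sim \frac{C\,\domsing{L_\infty}^{-(n-1)}\,(n-1)^{-3/2}}{C\,\domsing{L_\infty}^{-n}\,n^{-3/2}} = \domsing{L_\infty}\,\left(\frac{n}{n-1}\right)^{3/2} \xrightarrow[n\to\infty]{} \domsing{L_\infty},
\]
which is precisely the claimed density.

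There is effectively no obstacle: the argument is bookkeeping on top of the closed form~\autoref{eq:K-nhnf} and the singularity analysis already carried out in~\autoref{th:Loo-approx}. The only point worth double-checking is that the leading-order asymptotic $\nthcoeff{\gL} \sim C\,\domsing{L_\infty}^{-n}\,n^{-3/2}$ is strong enough to control the ratio of consecutive coefficients, but this is immediate from the Newton--Puiseux expansion underlying that asymptotic (cf.~\autoref{th:newton-puiseux} and~\autoref{th:standard-func-scale}), which provides a full asymptotic series rather than merely a leading term and, in particular, ensures that the relative error between $\nthcoeff{\gL}$ and its leading-order approximation is $O(1/n)$, uniformly sufficient for the displayed ratio to converge.
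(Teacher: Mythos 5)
Your proposal is correct and follows essentially the same route as the paper: the paper likewise derives the density by combining the identity $K(z) = z + z\gL$ (which shifts coefficients by one) with the asymptotic estimate of \autoref{th:Loo-approx}, so that the ratio of consecutive coefficients of $\gL$ converges to $\domsing{L_\infty}$. Your additional remark about the strength of the asymptotic is harmless but unnecessary, since $a_{n-1}/a_n \to \domsing{L_\infty}$ already follows from the leading-order equivalence $\nthcoeff{\gL} \sim C\,\domsing{L_\infty}^{-n} n^{-3/2}$ alone.
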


Solving~\autoref{eq:H-hnf} we can find the asymptotic approximation on the growth rate of head normal forms, similarly to plain $`l$-terms (see~\autoref{th:Loo-approx}).

\begin{prop}
The asymptotic approximation of the number of $`l$-terms in head normal form of size $n$ is given by
   \begin{displaymath}
    [z^n]H(z) \sim \left(\frac{1}{\domsing{L_\infty}}\right)^n \frac{C_H}{n^{\frac{3}{2}}}
  \end{displaymath}
  with $ C_H \doteq 0.254625911836762946$. 
\end{prop}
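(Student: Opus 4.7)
The plan is to perform singularity analysis on $H(z)$ essentially in parallel with the argument used for \autoref{th:Loo-approx}. Starting from $H(z) = \frac{z(1+\gL)}{1-z}$ (which is what equations~\eqref{eq:H-hnf} and~\eqref{eq:K-nhnf} give), I would first substitute the closed form
\[
\gL = \frac{(1-z) - \sqrt{\DeltagL}}{2z}, \qquad \DeltagL = \frac{(\domsing{L_\infty}-z)\,Q(z)}{1-z},
\]
already used in~\autoref{th:Loo-approx}. A short algebraic simplification turns $H(z)$ into a sum of an analytic part and a single singular term:
\[
H(z) \;=\; \frac{z}{1-z} + \frac12 \;-\; \frac{\sqrt{(\domsing{L_\infty}-z)\,Q(z)}}{2(1-z)^{3/2}} .
\]

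Next I would locate the dominant singularity. The rational piece $\frac{z}{1-z}$ has its only singularity at $z=1$, whereas the third summand introduces a branch-type singularity at $z=\domsing{L_\infty}\doteq 0.29559$. Since $\domsing{L_\infty}<1$, \autoref{th:pringsheim} combined with \autoref{thm:exponential-growth-formula} gives the exponential growth $\domsing{L_\infty}^{-n}$. The factors $Q(z)$, $\sqrt{1-z}$ and the prefactor $\frac{1}{(1-z)^{3/2}}$ are analytic and non-vanishing at $\domsing{L_\infty}$, so the singular behaviour of $H(z)$ is governed entirely by $\sqrt{\domsing{L_\infty}-z}$.

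Applying the Newton--Puiseux expansion (\autoref{th:newton-puiseux}) near $\domsing{L_\infty}$ I would read off the leading singular term
\[
H(z) \;\underset{z\to\domsing{L_\infty}}{=}\; H(\domsing{L_\infty}) \;-\; \frac{\sqrt{Q(\domsing{L_\infty})}}{2\,(1-\domsing{L_\infty})^{3/2}}\,\sqrt{\domsing{L_\infty}-z} \;+\; \cdots
\]
Then the standard function scale (\autoref{th:standard-func-scale}), together with the identity $[z^n]\sqrt{\domsing{L_\infty}-z} = \sqrt{\domsing{L_\infty}}\,[z^n](1-z/\domsing{L_\infty})^{1/2}$ and $\Gamma(-\tfrac12)=-2\sqrt{\pi}$, yields
\[
[z^n]H(z) \;\sim\; \left(\frac{1}{\domsing{L_\infty}}\right)^{\!n}\! \frac{C_H}{n^{3/2}},\qquad C_H \;=\; \frac{\sqrt{\domsing{L_\infty}\,Q(\domsing{L_\infty})}}{4\sqrt{\pi}\,(1-\domsing{L_\infty})^{3/2}} .
\]
Finally, plugging in the numerical values $\domsing{L_\infty}\doteq 0.29559774$ and $Q(\domsing{L_\infty})\doteq 3.85321718$ (both already computed in the proof of \autoref{th:Loo-approx}) produces $C_H \doteq 0.254625911836762946$, matching the claim.

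There is no real obstacle here beyond careful bookkeeping: the argument is essentially \autoref{th:Loo-approx} multiplied by the analytic factor $\frac{1}{1-z}$ evaluated at $\domsing{L_\infty}$, which supplies the extra $(1-\domsing{L_\infty})^{-1}$ that differentiates $C_H$ from $C$. The only subtlety worth verifying explicitly is that repackaging $\gL$ inside $\frac{z(1+\gL)}{1-z}$ does not introduce a singularity elsewhere in $|z|\le \domsing{L_\infty}$, which follows from $\gL$ being analytic on that disk except at $\domsing{L_\infty}$ (as established in \autoref{th:Loo-approx}) and from $1-z$ being non-vanishing there.
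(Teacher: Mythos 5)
Your argument is correct and takes exactly the paper's route: the paper's own proof merely states that it is analogous to \autoref{th:Loo-approx} and records the constant $C_H = \frac{-\sqrt{\domsing{L_\infty} Q(\domsing{L_\infty})/(1-\domsing{L_\infty})}}{2(1-\domsing{L_\infty})\,\Gamma(-\nicefrac{1}{2})}$, which is algebraically identical to your $\frac{\sqrt{\domsing{L_\infty}\,Q(\domsing{L_\infty})}}{4\sqrt{\pi}\,(1-\domsing{L_\infty})^{3/2}}$, and your decomposition of $H(z)$ and the subsequent transfer are all sound. The only blemish is the closing gloss that $C_H$ differs from $C$ by the factor $(1-\domsing{L_\infty})^{-1}$: since $K(z)=z+z\gL$ carries an extra factor of $z$ relative to $\gL$'s singular part, the correct ratio is $C_H/C=\domsing{L_\infty}/(1-\domsing{L_\infty})$ (consistent with the density corollary), but this imprecision lives only in your summary remark and does not affect the derivation, whose final formula and numerical value are right.
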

\begin{proof}
  The proof is analogous to the one of~\autoref{th:Loo-approx} with
  \begin{displaymath}
    C_H = \frac{-\sqrt{\domsing{L_\infty} \frac{Q(\domsing{L_\infty})}{1-\domsing{L_\infty}}}}{2(1-\domsing{L_\infty})`G(-\frac{1}{2})}
    \doteq 0.254625911836762946.
  \end{displaymath}
\end{proof}

Comparing it with the growth rate of $\nthcoeff{\gL}$ we obtain the following corollary.

\begin{corollary}
The density of head normal forms in the set of plain terms equal to \[\frac{\domsing{L_\infty}}{1-\domsing{L_\infty}} \doteq 0.41964337760707887.\]
\end{corollary}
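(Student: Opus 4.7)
The plan is to obtain the density by forming the ratio of the two asymptotic estimates: the one for $[z^n]H(z)$ from the proposition immediately preceding, and the one for $[z^n]\gL$ from \autoref{th:Loo-approx}. Both share the exponential factor $(1/\domsing{L_\infty})^n$ and the sub-exponential $n^{-3/2}$, so these cancel in the quotient and the density is simply $C_H/C$. Reading off the explicit formulas given in the two proofs, the common factor $-\sqrt{\domsing{L_\infty} Q(\domsing{L_\infty})/(1-\domsing{L_\infty})}/(2\Gamma(-1/2))$ cancels, leaving only the disparate denominators $\domsing{L_\infty}$ (in $C$) and $1-\domsing{L_\infty}$ (in $C_H$). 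Hence $C_H/C = \domsing{L_\infty}/(1-\domsing{L_\infty})$.

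A slightly cleaner, constant-free route uses \eqref{eq:H-hnf} and \eqref{eq:K-nhnf} directly. One writes
\[
H(z) \;=\; \frac{z + z\gL}{1-z} \;=\; \frac{z}{1-z} \;+\; \frac{z}{1-z}\,\gL .
\]
The first summand is analytic at $z = \domsing{L_\infty}$ and therefore contributes nothing to the leading singular behaviour. In the second summand, the rational factor $z/(1-z)$ is analytic at $\domsing{L_\infty}$ with value $\domsing{L_\infty}/(1-\domsing{L_\infty})$, so it preserves the Puiseux exponent of $\gL$ and merely rescales the dominant singular coefficient by that value. By the transfer machinery already invoked in \autoref{th:Loo-approx} (\autoref{th:standard-func-scale} together with \autoref{th:newton-puiseux}), this yields $[z^n]H(z) \sim \frac{\domsing{L_\infty}}{1-\domsing{L_\infty}}\,[z^n]\gL$, so the density of head normal forms in the set of plain terms equals $\domsing{L_\infty}/(1-\domsing{L_\infty})$.

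No genuine obstacle arises: the only fact used beyond the two asymptotic propositions is that multiplication by a function analytic at the dominant singularity leaves the leading singular exponent unchanged and simply scales the coefficient by the value of the analytic factor at that point, which is standard in singularity analysis. The closed-form numerical approximation $0.41964337760707887$ then follows immediately by substituting $\domsing{L_\infty} \doteq 0.29559774252208393$ from \autoref{th:Loo-approx}. As a bonus consistency check, this conceptual route reproduces the value of $C_H$ stated in the preceding proposition, namely $C_H = \frac{\domsing{L_\infty}}{1-\domsing{L_\infty}}\cdot C$.
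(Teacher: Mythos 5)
Your first route is exactly the paper's argument: the corollary is obtained by comparing the asymptotic estimate for $[z^n]H(z)$ with that for $\nthcoeff{\gL}$ from \autoref{th:Loo-approx}, where the exponential and $n^{-3/2}$ factors cancel and the ratio of constants $C_H/C$ collapses to $\domsing{L_\infty}/(1-\domsing{L_\infty})$ because the two explicit expressions differ only in the denominators $2\domsing{L_\infty}\Gamma(-\frac{1}{2})$ versus $2(1-\domsing{L_\infty})\Gamma(-\frac{1}{2})$. Your second, constant-free route --- writing $H(z) = \frac{z}{1-z} + \frac{z}{1-z}\gL$ and observing that the rational prefactor is analytic at $\domsing{L_\infty}$ with value $\domsing{L_\infty}/(1-\domsing{L_\infty})$, so it only rescales the dominant singular coefficient --- is not what the paper does but is correct and arguably preferable: it bypasses the computation of $Q(\domsing{L_\infty})$ entirely and makes the density transparent as the value of an analytic multiplier at the singularity, while also serving as a consistency check on the stated value of $C_H$. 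Both derivations are sound.
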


Note that with the above density results, we are able to explain the effectiveness of Boltzmann samplers for plain $`l$-terms (see, e.g.~\cite{gry_les}), used with an additional \emph{rejection} phase. Consider the following approach. In order to sample a (neutral) head normal $`l$-term, we draw random plain $`l$-terms until the first (neutral) head normal $`l$-term is sampled. In the case of head normal forms, the expected number of samples for large $n$ equals $\frac{1-\domsing{L_\infty}}{\domsing{L_\infty}} \doteq 2.383$, while in the case of neutral head normal forms it is equal to $\frac{1}{\domsing{L_\infty}} \doteq 3.383$.

\subsection{Counting terms with bounded number of free indices}\label{subsec:m-open-lambda-terms}

In this section we are interested in counting terms with bounded number of
    distinct free de~Bruijn indices. We start by giving the generating function $D_m(z)$ 
    associated with the set of first $m$ indices.

    \begin{prop}
    Let $\D_m = \set{\underbar{\sf 0},\underbar{\sf 1},\ldots,\underbar{\sf{m-1}}}$
    where $m \in \nats$. Then
        \[ D_m(z) = \frac{z(1-z^m)}{1-z} .\]
    \end{prop}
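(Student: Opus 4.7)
The plan is to proceed exactly along the lines of \autoref{prop:D-gen-fun}, since this statement is the straightforward finite analogue. The class $\D_m$ contains exactly the $m$ indices $\underbar{\sf 0}, \underbar{\sf 1}, \ldots, \underbar{\sf{m-1}}$, and by the size convention established in \autoref{sec:natural-counting}, the index $\underbar{\sf n} = S^n \zerodot$ has size $n+1$ (one for $\zerodot$ and one for each of the $n$ successors).

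Hence the counting sequence ${(\nthcoeff{D_m(z)})}_{n \in \nats}$ begins with a $0$ (no indices of size $0$) followed by exactly $m$ ones in positions $1$ through $m$, and zeros thereafter. First I would write this out as
\[ D_m(z) = \sum_{n=0}^{m-1} z^{n+1} = z \sum_{k=0}^{m-1} z^k, \]
and then recognise the right-hand sum as a finite geometric series, which equals $\frac{1-z^m}{1-z}$. Multiplying through by $z$ gives the claimed closed form.

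There is no real obstacle here; the only thing to double-check is the indexing convention (that $\underbar{\sf 0} = \zerodot$ has size $1$, so the smallest index contributes a $z$ and not a $z^0$ term), which matches the convention already used in the proof of \autoref{prop:D-gen-fun}. As a sanity check, letting $m \to \infty$ recovers $D(z) = \frac{z}{1-z}$, consistent with \autoref{prop:D-gen-fun}.
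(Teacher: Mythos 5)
Your proposal is correct and follows essentially the same route as the paper: both start from the observation that $\nthcoeff{D_m(z)}$ equals $1$ exactly for $1 \leq n \leq m$ and $0$ otherwise, and then only differ cosmetically in the algebra (you sum the finite geometric series directly, while the paper writes $D_m(z) = D(z) - z^m D(z)$ and invokes \autoref{prop:D-gen-fun}).
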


    \begin{proof}
        Let us notice that 
        \[ \nthcoeff{D_m(z)} = 
              \begin{cases}
              1 &\mbox{if }  1 \leq n \leq m,\\
              0 &\mbox{otherwise.}
              \end{cases}
        \]
        It follows that we can express $D_m(z)$ as $D(z) - z^m D(z)$. Using~\autoref{prop:D-gen-fun} we finally obtain $D_m(z) = \frac{z}{1-z} -
        \frac{z^{m+1}}{1-z} =\frac{z(1-z^m)}{1-z}$, finishing the proof.
    \end{proof}
    
	Let $m \in \nats$. We denote by $\cgL[m]$ the set of $`l$-terms whose free indices are 
	elements of $\D_m$. Obviously, for every $m$ we have $\cgL[m] \subseteq \cgL[m+1]$.

    \begin{prop}
    The generating function associated with the set $\cgL[m]$ is given by
        \[\gL[m] = \frac{1-\sqrt{1 - 4z^2\left(\gL[m+1]
        + \frac{1-z^m}{1-z}\right)}}{2z}.\]
    \end{prop}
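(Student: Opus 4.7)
The plan is to derive a combinatorial specification for $\cgL[m]$ analogous to the one used in~\autoref{prop:Loo-gen-fun}, and then solve the resulting quadratic. The only genuinely new feature is how abstraction interacts with the free-index bound.

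First, I would observe that a term $T \in \cgL[m]$ falls into one of three mutually exclusive cases: $T$ is an application, an abstraction, or a de~Bruijn index. If $T = M_1 M_2$ is an application, then the free indices of $T$ are the union of those of $M_1$ and $M_2$, so both subterms lie in $\cgL[m]$. If $T = \underbar{\sf n}$ is a de~Bruijn index that occurs freely in itself, then we need $\underbar{\sf n} \in \D_m$. The crucial case is $T = `l M$: passing under the binder, the index $\underbar{\sf 0}$ inside $M$ refers to the new $`l$, while an outer free index $\underbar{\sf k}$ of $T$ corresponds to $\underbar{\sf{k+1}}$ inside $M$. Hence the free indices of $M$ lie in $\set{\underbar{\sf 0}, \underbar{\sf 1}, \ldots, \underbar{\sf m}} = \D_{m+1}$, and conversely any $M \in \cgL[m+1]$ yields a valid $`l M \in \cgL[m]$. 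This gives the combinatorial specification
\[ \cgL[m] = \cgL[m]\cgL[m] `(+) `l\, \cgL[m+1] `(+) \D_m .\]

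Translating this into generating functions, with application and abstraction contributing a factor of $z$ each, I obtain
\[ \gL[m] = z \gL[m]^2 + z \gL[m+1] + D_m(z) .\]
Substituting the expression for $D_m(z)$ from the preceding proposition produces
\[ z \gL[m]^2 - \gL[m] + z\gL[m+1] + \frac{z(1-z^m)}{1-z} = 0 ,\]
a quadratic in $\gL[m]$ treating $\gL[m+1]$ as a known series. Applying the quadratic formula with $a = z$, $b = -1$, $c = z\gL[m+1] + \frac{z(1-z^m)}{1-z}$ yields
\[ \gL[m] = \frac{1 \pm \sqrt{1 - 4z^2\left(\gL[m+1] + \frac{1-z^m}{1-z}\right)}}{2z} .\]

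To complete the proof I would select the correct branch: since $\gL[m]$ is the ordinary generating function of a combinatorial class with $[z^0]\gL[m] = 0$, it must be analytic at the origin, which rules out the $+$ branch (whose expansion has a pole at $z = 0$). The $-$ branch, on the other hand, gives $\gL[m] = z + O(z^2)$ as $z \to 0$, consistent with the fact that the unique term of size $1$ in $\cgL[m]$ (for $m \geq 1$) is $\zerodot$. Selecting the $-$ branch therefore produces the claimed formula. The only step that requires care is the abstraction case of the specification; everything else is routine algebra and branch selection.
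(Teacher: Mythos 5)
Your proof follows essentially the same route as the paper: the identical specification $\cgL[m] = \cgL[m] \cgL[m] `(+) `l\, \cgL[m+1] `(+) \D_m$, the same translation to a quadratic in $\gL[m]$, and the same choice of root. You merely make explicit two points the paper leaves implicit --- the de~Bruijn index shift that justifies using $\cgL[m+1]$ under the binder, and the selection of the $-$ branch via analyticity at the origin --- and your intermediate equation $\gL[m] = z\gL[m]^2 + z\gL[m+1] + D_m(z)$ has the correct sign on the abstraction term, where the paper's displayed version contains a minus-sign typo that its final formula silently corrects.
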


    \begin{proof}
		Due to the structure of $`l$-terms, we can set the following specification defining $\cgL[m]$:
        \[\cgL[m] = \cgL[m] \cgL[m] `(+) `l \cgL[m+1] 
        `(+)\, \D_m,\]
        which immediately implies
        \[\gL[m] =  z {\gL[m]}^2 - z \gL[m+1] + \frac{z(1-z^m)}{1-z}.\]
        Solving the above equation in $\gL[m]$, we obtain
        \[\gL[m] = \frac{1-\sqrt{\DeltagLm}}{2z} = \frac{1-\sqrt{1 -
        4z^2\left(\gL[m+1] + \frac{1-z^m}{1-z}\right)}}{2z}.\]
    \end{proof}
    
    Notice that $\gL[m]$, and in particular $\gL[0]$ -- counting the number of
    closed $`l$-terms, is defined using $\gL[m+1]$. If this definition is
    developed, then $\gL[m]$ is expressed by means of infinitely nested
    radicals~--~a known phenomenon already observed in other models of 
    $`l$-calculus (see, e.g.~\cite{gry_les,
    gittenberger-2011-ltbuh}).
    
    \medskip
	Although the challenging problem of finding asymptotic approximations on the number of closed $`l$-terms is still open, in~\cite{gittenberger_et_al:LIPIcs:2016:5741}, Gittenberger and Gołębiewski give the following bounds on the asymptotic growth rate of $\seq{\nthcoeff{\gL[0]}}_{n\in\nats}$.
    
    \begin{theorem}[see~\cite{gittenberger_et_al:LIPIcs:2016:5741}, Lemma 14]
    The following bounds hold:
    \[ \liminf_{n\to\infty} \frac{\nthcoeff{\gL[0]}}{\underline{C}n^{\nicefrac{3}{2}} \domsing{L_\infty}^{-n}} \geq 1 \quad \text{and} \quad \limsup_{n\to\infty} \frac{\nthcoeff{\gL[0]}}{\overline{C}n^{\nicefrac{3}{2}} \domsing{L_\infty}^{-n}} \leq 1,\]
    where $\underline{C} \doteq 0.07790995266$ and $\overline{C} \doteq 0.07790998229$.
    \end{theorem}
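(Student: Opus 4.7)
The strategy is to transfer the singularity analysis that worked for $\gL$ to $\gL[0]$, despite the latter being determined only implicitly through the infinite system of nested radicals
\[ \gL[m] = \frac{1 - \sqrt{1 - 4z^2\bigl(\gL[m+1] + D_m(z)\bigr)}}{2z}, \qquad m \geq 0. \]
The first step is to show that each $\gL[m]$ has $\domsing{L_\infty}$ as its dominant singularity of square-root type. The upper bound on the radius of convergence follows from $\cgL[m] \subseteq \cgL$, giving $\nthcoeff{\gL[m]} \leq \nthcoeff{\gL}$, so by~\autoref{th:pringsheim} the radius is at least $\domsing{L_\infty}$. That $\domsing{L_\infty}$ is actually singular comes from approximating $\gL[m]$ by truncating the recurrence: replacing $\gL[k]$ with $0$ at a large depth $k$ yields algebraic lower approximants whose dominant singularities converge, as $k \to \infty$, to $\domsing{L_\infty}$, while replacing $\gL[k]$ with $\gL$ yields upper approximants with the same limiting behaviour.

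The second and more delicate step is pinning down the singular expansion
\[ \gL[m](z) = a_m - b_m \sqrt{1 - z/\domsing{L_\infty}} + O\bigl(1 - z/\domsing{L_\infty}\bigr) \]
of each $\gL[m]$ near $\domsing{L_\infty}$. Substituting this ansatz into the recurrence yields a coupled system for the constants $(a_m, b_m)$. To sandwich $b_0$, I would use the two truncation schemes mentioned above: at depth $k$ one obtains rigorous lower and upper approximants $\underline{b}_0^{(k)} \leq b_0 \leq \overline{b}_0^{(k)}$, both of which converge as $k \to \infty$. Once $b_0$ has been sandwiched, an application of~\autoref{th:standard-func-scale} and~\autoref{th:newton-puiseux}, carried out exactly as in the proof of~\autoref{th:Loo-approx}, converts the Puiseux coefficient into the asymptotic constants $\underline{C}$ and $\overline{C}$.

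\textbf{Main obstacle.} The real difficulty is producing bounds sharp enough to give $\overline{C} - \underline{C}$ of order $10^{-8}$: this requires iterating the truncation to substantial depth and verifying uniformly that propagating the ``$0$ versus $\gL$'' discrepancy back through $k$ applications of the recurrence contracts sufficiently fast. Quantifying this contraction rate — essentially showing that the operator $\gL[m+1] \mapsto \gL[m]$ is a contraction on a suitable space of analytic functions in a fixed neighbourhood of $\domsing{L_\infty}$ — is the principal technical obstacle, and is where interval-arithmetic or rigorous numerical certificates would presumably enter.
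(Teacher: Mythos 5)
First, a point of reference: the paper itself gives no proof of this statement — it is imported verbatim from Lemma~14 of the cited work of Gittenberger and Gołębiewski — so your proposal can only be measured against that source. Your overall strategy, namely sandwiching $\gL[0]$ between algebraic approximants obtained by truncating the nested-radical recurrence at depth $k$ (substituting $0$ from below and $\gL$ from above) and then running singularity analysis, is indeed the strategy of the cited proof, and the liminf/limsup form of the conclusion is exactly what such a sandwich delivers.

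There is, however, a genuine gap in your second step. You posit a singular expansion $\gL[m] = a_m - b_m\sqrt{1 - z/\domsing{L_\infty}} + O(1 - z/\domsing{L_\infty})$ and propose to sandwich the Puiseux coefficient $b_0$. But $\gL[0]$ is not algebraic — it is given by infinitely nested radicals — so \autoref{th:newton-puiseux} does not apply to it, and it is not known that $\gL[0]$ admits a square-root singular expansion at $\domsing{L_\infty}$ at all. Indeed, if such an expansion existed with $b_0 \neq 0$, transfer would yield a genuine asymptotic equivalent $\nthcoeff{\gL[0]} \sim C n^{-3/2}\domsing{L_\infty}^{-n}$ with a single constant, which is strictly stronger than the stated theorem and is precisely the open problem that the liminf/limsup formulation with two distinct constants $\underline{C} \neq \overline{C}$ is designed to sidestep. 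The correct route, and the one taken in the reference, is to apply \autoref{th:newton-puiseux} and \autoref{th:standard-func-scale} only to the finite-depth approximants — each of which is algebraic with a bona fide square-root singularity and hence has exact coefficient asymptotics — and then transfer these to $\nthcoeff{\gL[0]}$ via the coefficientwise inequalities; $\underline{C}$ and $\overline{C}$ are the constants attached to two such finite-depth approximants, not two-sided bounds on a hypothetical $b_0$. Beyond this, the contraction estimate that you yourself flag as the main obstacle is left entirely unresolved, so the numerical values $\underline{C} \doteq 0.07790995266$ and $\overline{C} \doteq 0.07790998229$ are never actually derived: as it stands the proposal is a plan rather than a proof. (A small slip besides: the inclusion $\cgL[m] \subseteq \cgL$ gives a \emph{lower} bound on the radius of convergence of $\gL[m]$, not an upper one as written, although the conclusion you draw from it is the right one.)
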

    
    The above result implies, inter alia, that the asymptotic density of closed $`l$-terms in the set of plain ones cannot be equal to zero. Comparing the obtained constants $\underline{C}$ and $\overline{C}$ with the constant $C \doteq 0.60676$ in the asymptotic approximation of plain $`l$-terms (see~\autoref{th:Loo-approx}) we get the following corollary.
    
    \begin{corollary}
    We have the following numerical bounds on the lower and upper density of closed $`l$-terms in the set of plain ones:
    \begin{eqnarray*}
    \liminf_{n\to\infty} \frac{\nthcoeff{\gL[0]}}{\nthcoeff{\gL}} &\geq& 0.1284032445447953,\\
    \limsup_{n\to\infty} \frac{\nthcoeff{\gL[0]}}{\nthcoeff{\gL}} &\leq& 
    0.1284032933779419.\\
    \end{eqnarray*}
    \end{corollary}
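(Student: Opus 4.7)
The plan is to reduce the corollary to a division of constants by invoking the two preceding asymptotic results. By the quoted theorem of Gittenberger and Gołębiewski, we have $\nthcoeff{\gL[0]} \sim c_n \, n^{-3/2} \, \domsing{L_\infty}^{-n}$ where the ``constant'' $c_n$ is sandwiched in the limit between $\underline{C}$ and $\overline{C}$; more precisely the liminf of $c_n$ is at least $\underline{C}$ and the limsup of $c_n$ is at most $\overline{C}$. From \autoref{th:Loo-approx} we know that $\nthcoeff{\gL} \sim C \, n^{-3/2} \, \domsing{L_\infty}^{-n}$ with the same singularity $\domsing{L_\infty}$ and the same polynomial correction factor $n^{-3/2}$.

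Taking the quotient, the dominant singularity and the subexponential factor cancel, so that
\[
\frac{\nthcoeff{\gL[0]}}{\nthcoeff{\gL}} \sim \frac{c_n}{C},
\]
and therefore the lower and upper densities are bounded below by $\underline{C}/C$ and above by $\overline{C}/C$, respectively. The remaining task is purely numerical: substitute the explicit values. Using the closed-form expression for $C$ from the proof of \autoref{th:Loo-approx}, namely
\[
C = \frac{1}{\Gamma(-\tfrac{1}{2})} \cdot \frac{-\sqrt{\domsing{L_\infty} \, Q(\domsing{L_\infty})/(1-\domsing{L_\infty})}}{2\domsing{L_\infty}},
\]
together with $\Gamma(-\tfrac{1}{2}) = -2\sqrt{\pi}$, one computes $C$ to sufficient precision and then divides by the bounds $\underline{C} \doteq 0.07790995266$ and $\overline{C} \doteq 0.07790998229$.

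The only real obstacle is maintaining enough numerical precision to reach the eleven- to thirteen-digit figures stated in the corollary: the coarse approximation $C \doteq 0.60676$ only secures the first five digits of the ratio. One must compute $\domsing{L_\infty}$ (as the smallest positive root of $1-3z-z^2-z^3$) and $Q(\domsing{L_\infty})$ to about fifteen significant digits, and then propagate this precision through the square-root and division in the formula above. Plugging the resulting high-precision value of $C$ into $\underline{C}/C$ and $\overline{C}/C$ yields the stated bounds $0.1284032445447953$ and $0.1284032933779419$, completing the proof.
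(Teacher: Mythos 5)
Your argument is correct and coincides with the paper's (implicit) proof: the corollary is obtained precisely by dividing the Gittenberger--Go{\l}{\k e}biewski bounds by the asymptotic equivalent $\nthcoeff{\gL} \sim C\, n^{-\nicefrac{3}{2}}\, \domsing{L_\infty}^{-n}$ of~\autoref{th:Loo-approx}, so that the exponential and polynomial factors cancel and only the ratios $\underline{C}/C$ and $\overline{C}/C$ remain. Your additional observation --- that the five-digit approximation $C \doteq 0.60676$ does not by itself justify the sixteen-digit bounds, and that $C$ must be recomputed to high precision from its closed form before dividing --- is a fair point that the paper passes over in silence.
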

    
In other words for large $n$, we should expect that in the set of $`l$-terms of size $n$, roughly $12,84\%$ of them are closed. Immediately, this suggests the following naive approach for a dedicated rejection sampler for closed $`l$-terms: draw random plain $`l$-terms until the first closed one is sampled. With the above relative density bounds, we expect that in order to draw a uniformly random closed $`l$-term, we have to repeat the sampling roughly $13$ times on average, before the first success.

\section{Counting $`l$-terms with another notions of size}\label{sec:another-size-notions}
    Assume we take another notion of size in which $\zerodot$ has size zero,
    applications are of size two, whereas abstraction and successor keep their
    original size one. Formally,
    \begin{eqnarray*}
    		|`l \, N| &=& |N| + 1,\\
        |N \, M|&=& |N| + |M| + 2,\\
        |S \underbar{\sf n}| &=& |\underbar{\sf n}| + 1,\\
        |\zerodot| &=& 0.
    \end{eqnarray*}
    It is easy to verify that the corresponding generating 
    function\footnote{We write this function $A_1(z)$
    as a reference to the function $A(x,1)$ described in 
    \href{http://oeis.org/A105632}{\textbf{A105632}} of the \emph{Online
    Encyclopedia of Integer Sequences}~\cite{oeisEncyclopedia}.} $A_1$ fulfills
    the identity
    \[A_1(z) = z^2 A_1^2(z) - (1-z) A_1(z) + \frac{1}{1-z}.\]
   	
   	In particular, we have $\gL = z\,A_1(z)$ and hence $\nthcoeff{A_1(z)} = [z^{n+1}]\gL$. Indeed, the number of zeros in an arbitrary $`l$-term $T$ is equal to the number of its applications plus one. Suppose that the number of applications in $T$ is equal to $d$. Then, in the natural size notion where each constructor is of size one, applications and zeros in $T$ contribute $2d + 1$ to its size. On the other hand, in the above size notion applications and zeros contribute just $2d$ to $T$'s size. Since both size functions set the size of abstractions and successors to one, we obtain $\nthcoeff{A_1(z)} = [z^{n+1}]\gL$. It follows that both notions of size yield the sequence \href{http://oeis.org/A105633}{\textbf{A105633}}.

Suppose that we assume another size notion where:
  \begin{eqnarray*}
    |`l \, N| &=& |N| + 1,\\
    |N\,M|&=& |N| + |M| + 1,\\
    |S \underbar{\sf n}| &=& |\underbar{\sf n}| + 1,\\
    |\zerodot| &=& 0.
  \end{eqnarray*}
Then, the corresponding generating function $\agL$ is the solution of
\begin{displaymath}
  z \agL^2 - (1-z) \agL + \frac{1}{1-z} = 0 
\end{displaymath}
with discriminant $`D_{\agL} = \frac{1 - 7z + 3z^2 - z^3}{1-z}$ yielding the dominating singularity $`r_{M_\infty} \doteq 0.152292401860433$ and $1/`r_{M_\infty} \doteq 6.5663157700831193$. The first $10$ values of ${\seq{\nthcoeff{\agL}}}_{n\in\nats}$ are:

\begin{center}
\medskip
\begin{small}
  1, 3, 10, 40, 181, 884, 4539, 24142, 131821,
 734577, 4160626 .
 \end{small}
\medskip
\end{center}
This sequence is known as \href{http://oeis.org/A258973}{\textbf{A258973}} in the \emph{Online Encyclopedia of Integer Sequences}~\cite{oeisEncyclopedia} and grows significantly faster than \href{http://oeis.org/A105633}{\textbf{A105633}}. 

Remarkably, under some additional technical assumptions on the constructor sizes, counting sequences of plain $`l$-terms yield similar asymptotic expansions and behaviour. We refer the curious reader to~\cite{gittenberger_et_al:LIPIcs:2016:5741}.

\section{Counting $\lambda$-terms containing fixed $\lambda$-terms as subterms}\label{sec:fixed-lambda-terms}

Let $M$ be an arbitrary $`l$-term of size $p$ and $\T_M$ denote the set of
    $`l$-terms that contain $M$ as a subterm. In this section we focus on the
    asymptotic density of $\T_M$ in the set of all $`l$-terms.
    
    \begin{theorem}
        For a fixed term $M$, the asymptotic density of $\T_M$ is equal to $1$. 
        In other words, asymptotically almost all $`l$-terms contain $M$ as a subterm.
    \end{theorem}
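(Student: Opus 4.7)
The plan is to study the complementary class $\T_M^c := \cgL \setminus \T_M$ of $\lambda$-terms avoiding $M$ as a subterm, to derive a functional equation for its generating function $U(z)$, and then to show that $U(z)$'s dominant singularity strictly exceeds $\domsing{L_\infty}$. The conclusion will then follow by comparing exponential growth rates via \autoref{thm:exponential-growth-formula} and \autoref{th:Loo-approx}.

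First, I would specify $\T_M^c$ by noting that $T \in \T_M^c$ iff $T \neq M$ and each immediate subterm of $T$ lies in $\T_M^c$. Since $M$ admits a unique outermost parse as a de~Bruijn index, an abstraction, or an application, exactly one of the constructor branches in the specification $\cgL = \cgL\cgL \;`(+)\; `l \cgL \;`(+)\; \D$ overcounts $\T_M^c$ by the single term $M$. The immediate subterms of $M$ are strictly smaller than $M$, so they automatically belong to $\T_M^c$, which is what makes the single-monomial correction correct. This yields
\[
U(z) \;=\; \frac{z}{1-z} + z\,U(z) + z\,U(z)^2 - z^p,
\]
differing from~\eqref{eq:Loo-definition} precisely by the summand $-z^p$. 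Solving the quadratic, the singularities of $U(z)$ are governed by the polynomial
\[
P_p(z) \;:=\; 1 - 3z - z^2 - z^3 + 4 z^{p+1}(1-z),
\]
a controlled perturbation of the polynomial $1 - 3z - z^2 - z^3$ whose smallest positive root is $\domsing{L_\infty}$.

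The key step is then a location argument for the smallest positive root $\rho_U$ of $P_p$. Evaluating at $z = \domsing{L_\infty}$ gives $P_p(\domsing{L_\infty}) = 4\domsing{L_\infty}^{p+1}(1-\domsing{L_\infty}) > 0$, and both summands of $P_p$ are positive on the whole interval $(0, \domsing{L_\infty}]$, so $P_p$ has no zero there; since $P_p(1) = -4 < 0$, the intermediate value theorem then produces a root $\rho_U \in (\domsing{L_\infty}, 1)$. As $U(z)$ has non-negative coefficients, Pringsheim's theorem (\autoref{th:pringsheim}) identifies $\rho_U$ as the dominant singularity of $U$. Because $\rho_U > \domsing{L_\infty}$ strictly, the exponential growth formula combined with \autoref{th:Loo-approx} yields
\[
\frac{\nthcoeff{U(z)}}{\nthcoeff{\gL}} \;=\; O\!\left(\left(\frac{\domsing{L_\infty}}{\rho_U}\right)^{\!n} n^{3/2}\right) \longrightarrow 0,
\]
so the density of $\T_M^c$ vanishes and the density of $\T_M$ equals $1$. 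The main subtlety to get right is the justification of the $-z^p$ correction; once the functional equation is in place the rest reduces to a Pringsheim/IVT comparison of two algebraic generating functions.
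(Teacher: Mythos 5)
Your proposal is correct and follows essentially the same route as the paper: both arguments reduce the claim to showing that the generating function of the $M$-avoiding terms has discriminant $\DeltagL + 4z^{p+1}$, whose smallest positive root strictly exceeds $\domsing{L_\infty}$, and then conclude by comparing exponential growth rates via the exponential growth formula. The only difference is in the derivation of the functional equation: you specify the avoiding class directly (with the single $-z^p$ correction, which you rightly justify by noting that the immediate subterms of $M$ cannot contain $M$), whereas the paper writes an inclusion--exclusion specification for $\T_M$ itself and then forms $\gL - T_M(z)$ --- the resulting generating function $U(z)$ is identical.
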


    \begin{proof}
        Consider an arbitrary $T \in \T_M$. Either $T$ is equal to $M$, or
        $M$ is a proper subterm of $T$. In the latter case we have four
        additional cases. Either $T$ is an abstraction, or $T = T_1 T_2$
        and $M$ is a subterm of $T_1$, or $T_2$, or both. Combining, we obtain the
        following equation:
        \begin{eqnarray*}
            \T_M &=& M `(+) `l\, \T_M `(+) \T_M\, \cgL `(+) \cgL\, 
                \T_M `(-) \T_M\, \T_M.
        \end{eqnarray*}
        
        Note that by adding $\T_M\cgL$ and $\cgL\T_M$ we count each
        term $T = T_1 T_2$ containing $M$ in both $T_1$ and $T_1$ twice,
        therefore we have to subtract $\T_M\, \T_M$. Such a representation yields the
        following functional quadratic equation involving the corresponding 
        generating function $T_M(z)$:
        \[T_M(z) = z^p + z\,T_M(z) + 2z\, T_M(z)\,
            \gL - z\,T^{2}_M(z).\]
    
        Since $\sqrt{\Delta_{\gL}} = 1 - 2z\, \gL - z$ (see
        ~\autoref{prop:Loo-gen-fun}), we can express the discriminant of $T_M(z)$ as
        $`D_{T_M(z)} = \Delta_{\gL} + 4 z^{p+1}$. Hence $`D_{T_M(z)} > `D_{\gL}$.
        It follows that the root $`r_{T_M}$ of smallest modulus of $`D_{T_M(z)}$ 
        is strictly larger than the root $`r_{L_\infty}$ of smallest modulus of 
        $`D_{\gL}$, i.e.~$`r_{T_M} >`r_{L_\infty}$. Moreover,
        $T_{M}(z) = \frac{\sqrt{`D_{T_M(z)}} - \sqrt{`D_{\gL}}}{2z}$ and thus
        the generating function counting the number of $`l$-terms which do
        not contain $M$ as a subterm is given by
        \[ \gL - T_M(z) = \frac{(1-z) - \sqrt{`D_{T_M(z)}}}{2z}. \]
    
        Applying~\autoref{thm:exponential-growth-formula} we immediately
        get that the above set has asymptotic density $0$ and thus $\T_M$
        has asymptotic density equal to $1$.
    \end{proof}
    
    \begin{corollary}
        Asymptotically almost every $`l$-term is neither strongly normalising, nor typeable, nor in normal form.
    \end{corollary}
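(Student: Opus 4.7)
The plan is to apply the preceding theorem three times, with three carefully chosen fixed subterms, and then combine the outcomes via the observation that a finite union of density-zero sets is density zero.

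First, I would handle the easiest of the three properties: not being in normal form. Pick $M_1 = (`l\, \zerodot)\,\zerodot$, a concrete $\beta$-redex. By the previous theorem, the set $\T_{M_1}$ of $`l$-terms containing $M_1$ has asymptotic density $1$. Any term in $\T_{M_1}$ has a $\beta$-redex as a subterm and therefore cannot be in $\beta$-normal form. Hence the set of terms not in normal form has density $1$.

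Next, for strong normalisation, I would take $M_2 = \Omega$, which in our de~Bruijn encoding is $(`l\,\zerodot\,\zerodot)(`l\,\zerodot\,\zerodot)$. The term $\Omega$ admits the infinite reduction sequence $\Omega \to \Omega \to \cdots$, and since inner $\beta$-reduction can always be lifted into any containing context $C[\cdot]$, every $T \in \T_{M_2}$ admits an infinite reduction starting with $C[\Omega] \to C[\Omega] \to \cdots$. Thus every $T \in \T_{M_2}$ fails to be strongly normalising, and applying the theorem again gives that the set of non-SN terms has density $1$. From here typeability is essentially free: by the classical result that every simply typeable $`l$-term is strongly normalising (a consequence of the standard subject reduction plus normalisation theorem for simple types), the complement---untypeable terms---contains all non-SN terms and therefore also has density $1$.

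Finally, to conclude the conjunctive statement, note that if $S_1, S_2, S_3$ are the sets of terms that are not in normal form, not strongly normalising, and not typeable, respectively, then each has asymptotic density $1$, so each complement has density $0$, and therefore the complement of $S_1 \cap S_2 \cap S_3$---being contained in the union of the three complements---has density $0$ as well. Equivalently, $S_1 \cap S_2 \cap S_3$ itself has density $1$, which is exactly the statement of the corollary.

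The only non-routine step is the appeal to the fact that typeability implies strong normalisation; this can either be cited from the standard literature on the simply typed $`l$-calculus or, if one prefers a self-contained argument, one could instead apply the theorem a third time with a specifically untypeable witness such as $`l\,(\zerodot\,\zerodot)$ and invoke the fact that simple typeability is closed under taking subterms. Everything else reduces to a direct application of the already-established density-$1$ result.
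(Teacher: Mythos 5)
Your proof is correct and follows essentially the same route as the paper: both reduce the corollary to the preceding fixed-subterm density theorem with $\Omega$ as the key witness. The only cosmetic difference is that the paper applies the theorem once, to the single term $\Omega$ (which is at once non-normalising, untypeable, and not a normal form, using closure of typeability under subterms), whereas you split the argument into three applications recombined via the union bound on density-zero complements and route typeability through the strong-normalisation theorem for simple types; both variants are sound.
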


    \begin{proof}
            Consider the aforementioned $\mathsf{\Omega}$. Clearly, it is neither typeable nor in normal form. Moreover, as it is not
            normalising and asymptotically almost all $`l$-terms contain it as a
            subterm, we immediately get our claim.
    \end{proof}
   
    Let us notice the striking discrepancy between the density of strongly
    normalising terms in the natural model and the corresponding density in the model
    considered in~\cite{david}. In the latter case, variables tend to be arbitrarily
    far from their binders, since they do not contribute to the overall size. In the
    natural model, however, increasing an index (i.e., increasing the distance of the
    variable from the binder) increases the overall size and thus indices tend to be
    rather near their binding lambdas.

\section{Conclusions}\label{sec:conclusions}
We investigated the combinatorial aspects of $`l$-terms in the model with unary de~Bruijn indices and natural size notion. We provided effective size-preserving translations among plain $`l$-terms, black-white trees and zigzag-free ones. By exhibiting a bijection between Motzkin trees and neutral forms, we showed that our translation allows to exploit the exact-size Motzkin tree sampler of Bacher et al.~\cite{BBJ} yielding an exact-size sampler for neutral $`l$-terms. Next, we considered the classes of head normal forms and neutral head normal forms, linking their positive densities in the set of plain $`l$-terms with the effectiveness of rejection Boltzmann samplers for the aforementioned classes. Finally, we proved that strongly normalising terms, as typeable ones or normal forms, are asymptotically negligible in the set of all $`l$-terms, contrary to the model considered in~\cite{david}. The following figure summarises our density results.
\begin{figure}[!htp]
    \centering
    \begin{tabular}{c@{\qquad\qquad}c@{\qquad}c@{\qquad\qquad\qquad\qquad}c}
      \textsf{nf}&\textsf{nhnf}&& $\T_M$\\
       \textsf{sn}&&\textsf{hnf} &$\mathsf{\overline{sn}}$\\[2pt]\hline
      \large{\textsf{0}} &\large{\textsf{0.295...}}& \large{\textsf{0.419...}} & \large{\textsf{1}}
    \end{tabular}
\begin{eqnarray*}
 \textsf{nf} &-& \text{normal forms} \\
 \T_M &-& \text{terms containing subterm } M\\
 \textsf{nhnf} &-& \text{neutral head normal forms}\\
 \textsf{hnf} &-& \text{head normal forms}\\
 \textsf{sn} &-& \text{strongly normalising terms}\\
  \overline{\textsf{sn}} &-& \text{not strongly normalising terms}
\end{eqnarray*}
    \label{fig:summary}
  \end{figure}
  \vspace{-0.9cm}
\bibliographystyle{plain}
\bibliography{references}
\end{document}